% File tacl2021v1.tex
% Dec. 15, 2021

% The English content of this file was modified from various *ACL instructions
% by Lillian Lee and Kristina Toutanova
%
% LaTeXery is mostly all adapted from acl2018.sty.

\documentclass[11pt,a4paper]{article}
\usepackage{times,latexsym}
\usepackage{url}
\usepackage[T1]{fontenc}

%% Package options:
%% Short version: "hyperref" and "submission" are the defaults.
%% More verbose version:
%% Most compact command to produce a submission version with hyperref enabled
%%   \usepackage[]{tacl2021v1}
%% Most compact command to produce a "camera-ready" version
    \usepackage[acceptedWithA]{tacl2021v1}
%% Most compact command to produce a double-spaced copy-editor's version
%%    \usepackage[acceptedWithA,copyedit]{tacl2021v1}
%
%% If you need to disable hyperref in any of the above settings (see Section
%% "LaTeX files") in the TACL instructions), add ",nohyperref" in the square
%% brackets. (The comma is a delimiter in case there are multiple options specified.)

\usepackage[]{tacl2021v1}
\setlength\titlebox{4.0cm} % <- for Option 2 below

%%%% Material in this block is specific to generating TACL instructions
\usepackage{xspace,mfirstuc,tabulary}

%% Additional packages
%theorem and math includes
\usepackage{float}
\usepackage{graphicx}
\usepackage{amssymb}
\usepackage{amsmath}
\usepackage{amsthm}
\usepackage{mathtools}
\usepackage{svg}
\usepackage{algorithm}
\usepackage[noend]{algpseudocode}
\usepackage{booktabs}

% drawing FSAs
\usepackage{tikz}
\usetikzlibrary{automata, fit, positioning, matrix, arrows.meta, decorations.pathmorphing, shapes, shapes.multipart, chains,shadows.blur}

% plotting packages
\usepackage{tikz-qtree}

\usepackage{tikz-dependency}

%tikz FSA settings
\tikzset{
-{Latex[length=1.8mm, width=1.4mm]}, 
every initial by arrow/.append style={anchor/.append style={shape=coordinate}},
every state/.style={thick, fill=gray!10},
every loop/.append style=-{Latex[length=1.8mm, width=1.4mm]},
initial text=,
initial distance=0.5cm,
line width=0.7pt,
node distance=15mm,
minimum size=4mm,
}

% minor add-ons

%\newcommand{\mymacro}[1]{{\color{MacroColor} #1}}
\newcommand{\mymacro}[1]{{#1}}

%Keywords macro

% highlighting in mathmode
%

\newcommand{\defn}[1]{\textbf{#1}}

%%%%%%%%%%%%%%%%%%%%%%%%%%%%%%%%%%%%%%%%%%%%%%%%%%%%%%%
%%% COLORS %%%%%%%%%%%%%%%%%%%%%%%%%%%%%%%%%%%%%%%%%%%%
%%%%%%%%%%%%%%%%%%%%%%%%%%%%%%%%%%%%%%%%%%%%%%%%%%%%%%%
\usepackage{colortbl}
%\usepackage[dvipsnames]{xcolor}

% ETH colors
\definecolor{ETHBlue}{RGB}{33,92,175}	% blue 
\definecolor{ETHGreen}{RGB}{98,115,19}		% green
\definecolor{ETHPurpleDark}{RGB}{140,10,89}	% purple
\definecolor{ETHPurple}{RGB}{163,7,116}	% purple
\definecolor{ETHGray}{RGB}{111,111,111}	% gray
\definecolor{ETHRed}{RGB}{183,53,45}	% red
\definecolor{ETHPetrol}{RGB}{0,120,148}	% green/blue 
\definecolor{ETHBronze}{RGB}{142,103,19}	% bronze

% ETH color aliases 
\colorlet{ETHdarkblue}{ETHBlue!80!black}
\colorlet{ETHdarkgreen}{ETHGreen!80!black}
\colorlet{ETHpink}{ETHPurple}
\colorlet{ETHgray}{ETHGray}
\colorlet{ETHred}{ETHRed}
\colorlet{ETHgreenblue}{ETHPetrol}
\colorlet{ETHbrown}{ETHBronze}

\colorlet{MacroColor}{ETHPetrol}
\colorlet{MACROCOLOR}{MacroColor}

%% Cleveref settings
\usepackage{cleveref}
\crefname{section}{\S}{\S\S}
\crefname{table}{Tab.}{Tabs.}
\crefname{figure}{Fig.}{Figs.}
\crefname{algorithm}{Algorithm}{Algorithms}
\crefname{equation}{Eq.}{Eqs.}
\crefname{example}{Example}{Examples}
\crefname{fact}{Fact}{Facts}
\crefname{appendix}{Appendix}{Appendices}
\crefname{theorem}{Theorem}{Theorems}
\crefname{reTheorem}{Theorem}{Theorems}
\crefname{aquestion}{Question}{Questions}
\crefname{assumption}{Assumption}{Assumptions}
\crefname{lemma}{Lemma}{Lemmas}
\crefname{reLemma}{Lemma}{Lemmas}
\crefname{proposition}{Proposition}{Propositions}
\crefname{chapter}{Chapter}{Chapters}
\crefname{line}{line}{lines}
\crefname{principle}{Principle}{Principles}
\crefname{definition}{Definition}{Definitions}
\crefname{corollary}{Corollary}{Corollaries}
\crefname{Exercise}{Exercise}{Exercises}
\crefname{observation}{Observation}{Observations}
\crefformat{section}{\S#2#1#3} 

%%%%%%%%%%%%%%%%%%%%%%%%%%%%%%%%%%%%%%%%%%%%%%%%%%%%%%%%%%%%%%%%%%%
% CUSTOM
%%%%%%%%%%%%%%%%%%%%%%%%%%%%%%%%%%%%%%%%%%%%%%%%%%%%%%%%%%%%%%%%%%%

% basic definitions for proof making
\newtheorem{definition}{Definition}[section]
\newtheorem{theorem}{Theorem}[section]
\newtheorem{observation}[theorem]{Observation}

\newtheorem{corollary}[theorem]{Corollary}
\newtheorem*{remark}{Remark}

\newtheorem{proposition}[theorem]{Proposition}

\newtheorem{lemma}[theorem]{Lemma}

% Restatable
%\declaretheorem[name=Lemma,numberwithin=section]{reLemma}
%\declaretheorem[name=Theorem,numberwithin=section]{reTheorem}

%%%%%%%%%%%%%%%%%%%%%%%%%%%%%%%%%%%%%%%
% ALGORITHM SETTINGS
%%%%%%%%%%%%%%%%%%%%%%%%%%%%%%%%%%%%%%%
\usepackage{algorithm}
\usepackage[noend]{algpseudocode}  % do not also use algorithmic
\algrenewcommand\algorithmicindent{1.0em}%
\newcommand{\rightcomment}[1]{{\color{gray} \(\triangleright\){\footnotesize\textit{#1}}}}
\algrenewcommand{\algorithmiccomment}[1]{\hfill \rightcomment{#1}}  % redefines \Comment
\algnewcommand{\LinesComment}[1]{\State {\color{black!50!green}\rightcomment{\parbox[t]{.95\linewidth-\leftmargin-\widthof{\(\triangleright\) }}{#1}}}}
\algnewcommand{\LineComment}[1]{\State {\color{black!50!green}\smaller \(\triangleright\) \parbox[t]{\linewidth-\leftmargin-\widthof{\(\triangleright\) }}{\it #1}\smallskip}} % multi-line comment
\algnewcommand{\InlineComment}[1]{\hfill {\color{black!50!green}\(\triangleright\) {\scriptsize \it #1}}}
\algrenewcommand\algorithmicindent{1.0em}%
\algrenewcommand\alglinenumber[1]{{\tiny\color{black!50}#1.}\hspace{-2pt}}
\newcommand{\algorithmicfunc}[1]{\textbf{def} {#1}:}
\algdef{SE}[FUNC]{Func}{EndFunc}[1]{\algorithmicfunc{#1}}{}
\makeatletter
\ifthenelse{\equal{\ALG@noend}{t}}%
  {\algtext*{EndFunc}}
  {}%
\makeatother

%%%%%%%%%%%%%%%%%%%%%%%%%%%%%%%%%%%%%%%
% END ALGORITHM SETTINGS
%%%%%%%%%%%%%%%%%%%%%%%%%%%%%%%%%%%%%%%

% math

\newcommand{\R}{{\mymacro{ \mathbb{R}}}}
\newcommand{\K}{{\mymacro{ \mathbb{K}}}}

% transition monoids and related
\newcommand{\monoid}{{\mymacro{\mathbb{M}}}}
\newcommand{\monoidS}{{\mymacro{\mathbb{S}}}}
\newcommand{\transmonoid}[1]{{\mymacro{\mathbb{S}(#1)}}}
\newcommand{\wtransmonoid}[1]{{\mymacro{\mathbb{W}(#1)}}}
\newcommand{\morphismdef}{{\mymacro{\mu}}}
\newcommand{\morphism}[1]{{\mymacro{\mu(#1)}}}
\newcommand{\permgroup}{{\mymacro{S_n}}}

\newcommand{\indexM}[1]{{\mymacro{\textup{index}(#1)}}}
\newcommand{\periodM}[1]{{\mymacro{\textup{period}(#1)}}}
\newcommand{\0}{{\mymacro{\mathbf{0}}}}
\newcommand{\Boolean}{{\mymacro{\mathbb{B}}}}
\newcommand{\Tropical}{{\mymacro{\mathbb{T}}}}
\newcommand{\tw}[1]{{\mymacro{\textsc{tw}(#1)}}}

% general formal language theory macros
\newcommand{\alphabet}{{\mymacro{ \Sigma}}}

\newcommand{\kleene}[1]{{\mymacro{#1^*}}}

%%%% ICML template
\def\1{\mathbf{1}}

\def\eps{{\mymacro{ \varepsilon}}}

% strings and symbols
\newcommand{\str}{{\mymacro{\boldsymbol{y}}}}

\newcommand{\syma}{{\mymacro{a}}}

\newcommand{\set}[1]{{\mymacro{\left\{ #1 \right\}}}}
\newcommand{\powerset}[1]{{\mymacro{\mathcal{P}{\left(#1\right)}}}}

% numbers and probabilities

\newcommand{\N}{{\mymacro{ \mathbb{N}}}}

% Indexing, sizes

\newcommand{\nstates}{{\mymacro{ |\states|}}}
\newcommand{\detstates}{{\mymacro{ \states_{\textsc{det}}}}}
\newcommand{\nsymbols}{{\mymacro{ |\alphabet|}}}

% graphs
\newcommand{\graph}{\mymacro{ \mathcal{G}}}

% automaton
\newcommand{\automaton}{{\mymacro{ \mathcal{A}}}}
\newcommand{\wfsa}{{\mymacro{ \automaton}}}

\newcommand{\detautomaton}{{\mymacro{ \automaton_{\textsc{det}}}}}

\newcommand{\stateq}{{\mymacro{ q}}}

\newcommand{\states}{{\mymacro{ Q}}}

\newcommand{\trans}{{\mymacro{ \delta}}}
\newcommand{\dettrans}{{\mymacro{ \delta_{\textsc{det}}}}}

\newcommand{\apath}{{\mymacro{ \boldsymbol \pi}}}

\newcommand{\initial}{{\mymacro{ I}}}

\newcommand{\final}{{\mymacro{ F}}}
\newcommand{\detfinal}{{\mymacro{ F_{\textsc{det}}}}}
\newcommand{\initf}{{\mymacro{ \lambda}}}
\newcommand{\detinitf}{{\mymacro{ \lambda_{\textsc{det}}}}}
\newcommand{\finalf}{{\mymacro{ \rho}}}
\newcommand{\detfinalf}{{\mymacro{ \rho_{\textsc{det}}}}}

\newcommand{\languagefsa}[1]{{\mymacro{\mathcal{L} ( #1 )}}}

% removed alphabet from this
\newcommand{\fsatuple}{{\mymacro{ \left( \alphabet, \states, \initial, \final, \trans \right)}}}
\newcommand{\wfsatuple}{{\mymacro{ \left( \alphabet, \states, \trans, \initf, \finalf \right)}}}

\newcommand{\edgenoweight}[3]{#1 \xrightarrow{#2} #3}
\newcommand{\edge}[4]{{\mymacro{#1 \xrightarrow{#2 / #3} #4}}}

\newcommand{\powerstate}{{\mymacro{ \mathcal{Q}}}}
\newcommand{\initpowerstate}{{\mymacro{ \mathcal{Q}_{\initial}}}}

% \NewDocumentCommand{\resid}{gm}{
%     \ensuremath{{\mymacro{ r_{\IfNoValueTF{#1}{}{#1}} \left( #2 \right)}}}
% }

% \newcommand{\eqclass}[1]{{\mymacro{ \bar{#1}}}}

% algorithm names
% \newcommand{\ifcondition}{\textbf{if }}

% Vectors

% Elements of vectors

% Matrix

% Graph

% Sets

% Don't use a set called E, because this would be the same as our symbol
% for expectation.

\def\sI{{{\mymacro{ \mathcal{I}}}}}

\def\sL{{{\mymacro{ \mathcal{L}}}}}

\def\sT{{{\mymacro{ \mathcal{T}}}}}

\def\sW{{{\mymacro{ \mathcal{W}}}}}

% Entries of a matrix

% Tokenization macros
% \newcommand{\mergeemph}[1]{\colorbox{gray!30}{#1}}
% \DeclareCaptionType{myexamplef}[Example][List of Examples]
% \crefname{myexamplef}{Example}{}
% \DeclareCaptionType{codef}[Snippet][List of Snippets]
% \crefname{codef}{Snippet}{}

\newcommand{\bigO}[1]{{\mymacro{ \mathcal{O}\left(#1\right)}}}
\newcommand{\bigOmega}[1]{{\mymacro{ \Omega\left(#1\right)}}}

% name of jacobian matrix

% minor add-ons

%%%%%%%%%%%%%%%%%%%%%%%%%%%%%%%%%%%%%%%%%
% arXiv add-ons
%%%%%%%%%%%%%%%%%%%%%%%%%%%%%%%%%%%%%%%%%
\hypersetup{
    breaklinks=true,   % splits links across lines   
}

%%%%%%%%%%%%%%%%%%%%%%%%%%%%%%%%%%%%%%%%%
% TODO NOTES
%%%%%%%%%%%%%%%%%%%%%%%%%%%%%%%%%%%%%%%%%
% \usepackage[disable]{todonotes}
% \usepackage{todonotes}
% \definecolor{mintgreen}{RGB}{152, 255, 152}
% \makeatletter
% \newcommand*\iftodonotes{\if@todonotes@disabled\expandafter\@secondoftwo\else\expandafter\@firstoftwo\fi}  %
% \makeatother
% \newcommand{\noindentaftertodo}{\iftodonotes{\noindent}{}}
% \newcommand{\fixme}[2][]{\todo[color=yellow,size=\scriptsize,fancyline,caption={},#1]{#2}}
% % \renewcommand{\note}[4][]{\todo[author=#2,color=#3,size=\scriptsize,fancyline,caption={},#1]{#4}} % default note settings, used by macros below.
% \newcommand{\note}[4][]{\todo[author=#2,color=#3,size=\scriptsize,fancyline,caption={},#1]{#4}} % default note settings, used by macros below.

% \newcommand{\ryan}[2][]{\note[#1]{ryan}{violet!40}{#2}}
% \newcommand{\ivan}[2][]{\note[#1]{\textbf{Ivan}}{mintgreen}{#2}}

% \newcommand{\Ryan}[2][]{\ryan[inline,#1]{#2}\noindentaftertodo}
% \newcommand{\Ivan}[2][]{\ivan[inline,#1]{#2}\noindentaftertodo}

% \newcommand{\response}[1]{\vspace{3pt}\hrule\vspace{3pt}\textbf{#1:}}    % insert \response{myname} within someone else's todonote

%%%%%%%%%%%%%%%%%%%%%%%%%%%%%%%%% START %%%%%%%%%%%%%%%%%%%%%%%%%%%%%%%%%%

\newif\iftaclinstructions
\taclinstructionsfalse % AUTHORS: do NOT set this to true
\iftaclinstructions

\newcommand{\instr}
\fi

\iftaclpubformat % this "if" is set by the choice of options

\else

\fi

%%%% End TACL-instructions-specific macro block
%%%%

\title{An Analysis of On-the-fly Determinization of Finite-state Automata}

% Author information does not appear in the pdf unless the "acceptedWithA" option is given

% The author block may be formatted in one of two ways:

% Option 1. Author’s address is underneath each name, centered.

\author{Ivan Baburin \\
  ETH Z{\"u}rich \\ 
  \texttt{ivan.baburin@inf.ethz.ch} \\\And
  Ryan Cotterell \\
  ETH Z{\"u}rich \\
  \texttt{ryan.cotterell@inf.ethz.ch} \\}

% % Option 2.  Author’s address is linked with superscript
% % characters to its name, author names are grouped, centered.

% \author{
%   Template Author1\Thanks{The {\em actual} contributors to this instruction
%     document and corresponding template file are given in Section
%     \ref{sec:contributors}.}$^\diamond$ 
%   \and
%   Template Author2$^\dagger$
%   \\
%   \ \\
%   $^\diamond$Template Affiliation1/Address Line 1
%   \\
%   Template Affiliation1/Address Line 2
%   \\
%   Template Affiliation1/Address Line 2
%   \\
%   \texttt{template.email1example.com}
%   \\
%   \ \\
%   \\
%   $^\dagger$Template Affiliation2/Address Line 1
%   \\
%   Template Affiliation2/Address Line 2
%   \\
%   Template Affiliation2/Address Line 2
%   \\
%   \texttt{template.email2@example.com}
% }

\date{}

\begin{document}

\maketitle
\begin{abstract}
  In this paper we establish an abstraction of on-the-fly determinization of finite-state automata using transition monoids and demonstrate how it can be applied to bound the asymptotics. We present algebraic and combinatorial properties that are sufficient for a polynomial state complexity of the deterministic automaton constructed on-the-fly. A special case of our findings is that automata with many non-deterministic transitions almost always admit a determinization of polynomial complexity. Furthermore, we extend our ideas to weighted finite-state automata.\looseness=-1

  \begin{keywords}
     \enspace \textit{State complexity,} \enspace \textit{On-the-fly algorithm,} \enspace \textit{Determinization,} \enspace \textit{Automata}
  \end{keywords}
\end{abstract}

\section{Introduction}

% These instructions are for authors submitting papers to ACL 2023 using \LaTeX. They are not self-contained. All authors must follow the general instructions for *ACL proceedings,\footnote{\url{http://acl-org.github.io/ACLPUB/formatting.html}} as well as guidelines set forth in the ACL 2023 call for papers.\footnote{\url{https://2023.aclweb.org/calls/main_conference/}} This document contains additional instructions for the \LaTeX{} style files.
% The templates include the \LaTeX{} source of this document (\texttt{acl2023.tex}),
% the \LaTeX{} style file used to format it (\texttt{acl2023.sty}),
% an ACL bibliography style (\texttt{acl\_natbib.bst}),
% an example bibliography (\texttt{custom.bib}),
% and the bibliography for the ACL Anthology (\texttt{anthology.bib}).

One of the fundamental results in the theory of finite-state automata is the fact non-deterministic automata are equivalent in terms of computational power to deterministic ones. 
The classical conversion utilizes a power set construction, originally presented by \citet{RabinPowerset}, which results in an exponential blow-up of state number. Their upper bound was proven to be tight by \citet{moore1971bounds} who exhibited a simple automaton\footnote{This automaton is known as Moore's automaton.} (given in \cref{fig:Moore-NFA}) with $n$ states that requires at least $2^n$ states for its determinization.

Much research has be published to discuss the bounds on the state complexity of minimal deterministic automata for different types of subregular languages. \citet{BORDIHN20093209} presents a summary of many known results demonstrating that almost all classes of subregular languages will in the worst case require exponential state complexity. Other methods measure the amount of non-determinism in general automata and connect it with the state complexity, e.g., \citet{HromkovicTree} motivates the use of tree width as a measure of automaton complexity.\looseness=-1

We will take a different approach motivated by the practical application of finite-state techniques.
Instead, analyse the on-the-fly variant of the classic power set construction, originally presented by \citet{Mohri-on-the-fly}. Surprisingly, despite its numerous  practical applications as in \citet{allauzen2007openfst}, there has not been a concise analysis of its complexity presented in the literature. 
We will attempt to bridge the gap and present a simple abstraction to estimate its performance and present different criteria which would ensure its efficient (i.e., polynomial time) execution on various classes of automata. Although the criteria are restrictive, they can still be used as an intuition to estimate the state complexity in certain practical applications.

Our main idea lies in connecting the execution paradigm of on-the-fly construction to the size of transition monoid and formulating sufficient criteria which translate into polynomial state complexity of the output. 
Moreover, since on-the-fly algorithm produces a valid deterministic automaton our bounds additionally translate into upper bounds for minimal deterministic automata. In summary:\looseness=-1
\begin{figure}
    \centering
    \begin{tikzpicture}
    \node[state, initial] (q1) {$q_1$};
    \node[state, right of=q1] (q2) {$q_2$};
    \node[state, right of=q2] (q3) {$q_3$};
    \node[fill=white, right of=q3] (q?) {$\ldots$};
    \node[state, accepting, right of=q?] (qn) {$q_n$};
    \draw (q1) edge[loop above] node{$b$} (q1)
          (q1) edge[above] node{$a$} (q2)
          (q2) edge[above] node{$a, b$} (q3)
          (q3) edge[above] node{$a, b$} (q?)
          (q?) edge[above] node{$a, b$} (qn)
          (qn) edge[bend right=1.7cm, above] node{$a$} (q2)
          (qn) edge[bend left=1.5cm, below] node{$a$} (q1);
    \end{tikzpicture}
    \caption{Moore's non-deterministic automaton with $n$ states which for $n \geq 2$ requires at least $2^n$ states in its deterministic analogue}
    \label{fig:Moore-NFA}
\end{figure}
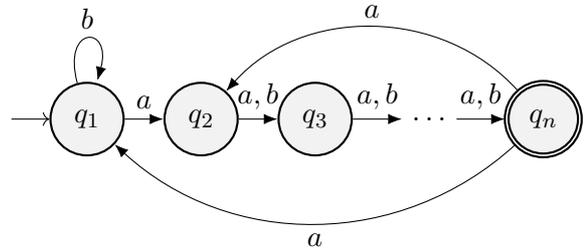

\begin{itemize}
    \item We formalize the connection between the on-the-fly algorithm and transition monoids, and show how it can be used to rederive known state complexity bounds for one-letter and commutative automata;
    \item We define strongly connected automata and establish state complexity bounds for them;
    \item We prove that almost all automata with large number of transitions will  only require a polynomial number of states;
    \item We extend our findings to weighted finite-state automata.
\end{itemize}

%We investigate som eknown results such unary nd commutative automata and connect to the on-th-fly but also presetn the results on strongly connected automata and as consequecne ona utomata with many transitions

% Sonce on-th-fly algorithm is produces a vlaid DFA this allos us to present simple alternative proofs for some known bound but also some novel results on state complexity of minimal determinsistic automata for certain classes of finite-state automata.

% In this work instead of focusing on minimal deterministic automata we consider polynomial upper bounds on the running time complexity of on-the-fly by connecting it to the size of various syntactic semigroups. We propose several sufficient algebraic criteria a finite-state automaton has to satisfy to guarantee polynomially bounded executing time with on-the-fly algorithm.

\section{Finite-state automata}

We start by briefly reciting core definitions for finite-state automata.

\begin{definition}
    A \defn{finite-state automaton (FSA)} $\automaton$ is a quintuple $\fsatuple$ where $\alphabet$ is an alphabet, $\states$ is a finite set of states, $\initial \subseteq \states$ is a set of starting states, $\final \subseteq \states$ is a set of accepting states and $\trans \subseteq \states \times \alphabet \cup \set{\eps} \times \states$ a finite multi-set of transitions.\looseness=-1
\end{definition}

The symbol $\eps$ represents an empty symbol, and thus a transition with label $\eps$ (or in short $\eps$-transition) can always be performed without the need for any additional input. 

\begin{definition}\label{def:path}
    A \defn{path} $\apath \in \kleene{\trans}$ is a sequence of consecutive transitions of the form $\edgenoweight{\stateq_0}{\bullet}{\stateq_1} \edgenoweight{}{\bullet}{\ldots}  \edgenoweight{}{\bullet}{\stateq_m}$ where $\bullet$ is a placeholder for the transition label. We will refer to the concatenation of symbols along the path as its \defn{yield}.
\end{definition}

We say that a word $\str \in \kleene{\alphabet}$ is \defn{recognized} by the FSA $\automaton$ if there exists a path from some starting state $\stateq \in \initial$ to some final state $\stateq' \in \final$ with yield $\str$. Moreover, we denote with $\languagefsa{\automaton}$ the \defn{language} (set of all words) recognized by $\automaton$. Two FSAs $\automaton$ and $\automaton'$ are called \defn{equivalent} if $\languagefsa{\automaton} = \languagefsa{\automaton'}$. 

\begin{definition}\label{def:deterministic}
    FSA $\detautomaton$ is called \defn{deterministic} if and only if it does not contain any $\eps$ transitions, the starting state is unique, i.e., $|\initial| = 1$, and for every $(\stateq, \syma) \in \states \times \alphabet$ there is at most one $\stateq' \in \states$ such that $(\stateq, \syma, \stateq') \in \trans$, thus we have at most one unique labeled transition from every state.
\end{definition}

Transforming a non-deterministic automaton into an equivalent deterministic one is a procedure we refer to as \defn{determinization}. 
Although the complexity of the asymptotically optimal algorithm lies in $\mathrm{EXPSPACE}$, one often uses the on-the-fly version of the classical power set construction, which for many automata is substantially more efficient.

We present the on-the-fly construction for completeness (\cref{alg:onthefly}) and define a notion of a \defn{power state} $\powerstate \subseteq \states$. Given a non-deterministic automaton $\automaton$ its deterministic counterpart will have a state space be a subset of a power set construction $\detstates \subseteq \powerset{\states}$. On-the-fly algorithm will start in the power state $\initpowerstate \coloneqq \{ \stateq \ | \ \stateq \in \initial \}$ and explore all possible power states that can be reached following the labeled transitions $\trans$.

\begin{algorithm}[t]
\caption{\textsc{On-the-fly}}\label{alg:onthefly}

\begin{algorithmic}
\Ensure $\automaton$ a FSA with no $\eps$-transitions
\State $\detautomaton \gets (\alphabet, \detstates, \powerstate_{\initial}, \detfinal, \dettrans)$
\State $\mathsf{stack} \gets \powerstate_\initial$
\State $\detstates \gets \{ \powerstate_\initial \}$
\While{$|\mathsf{stack}| > 0$}
    \State \textbf{pop} $\powerstate$ from the $\mathsf{stack}$
    \ForAll{$\syma \in \alphabet $}
        \State $\powerstate' \gets \{\stateq' \ | \ (\stateq, \syma, \stateq') \subseteq \trans, \ \stateq \in \powerstate \}$
        \State $\dettrans \gets \dettrans \cup \{\edgenoweight{\powerstate}{\syma}{\powerstate'} \}$
        \If{$\powerstate' \notin \detstates$}
            \State $\detstates \gets \detstates \cup \{ \powerstate' \}$
            \State \textbf{push} $\powerstate'$ on the $\mathsf{stack}$
        \EndIf
    \EndFor
\EndWhile
\State $\detfinal \gets \{ \powerstate \in \detstates \ | \ \powerstate \cap \final \neq \varnothing \}$
\State \textbf{return} $\gets \detautomaton$
\end{algorithmic}
\end{algorithm}

Correctness of \cref{alg:onthefly} follows from the power set construction, and we have at most $2^{\nstates}$ iterations of the while loop because every power state will appear on the stack at most once, thus the  worst case execution time is in $\bigO{\nsymbols \nstates 2^{\nstates}}$.
Moreover we can assume without loss of generality that our FSA $\automaton$ is $\eps$-free, since there exist a handful of asymptotically efficient procedures for computing an equivalent FSA containing no $\eps$-transitions (even for weighted automata), for example as presented by \citet{allauzen2007openfst}.

% The first line of the file must be
% \begin{quote}
% \begin{verbatim}
% \documentclass[11pt]{article}
% \end{verbatim}
% \end{quote}
% To load the style file in the review version:
% \begin{quote}
% \begin{verbatim}
% \usepackage[review]{ACL2023}
% \end{verbatim}
% \end{quote}
% For the final version, omit the \verb|review| option:
% \begin{quote}
% \begin{verbatim}
% \usepackage{ACL2023}
% \end{verbatim}
% \end{quote}
% To use Times Roman, put the following in the preamble:
% \begin{quote}
% \begin{verbatim}
% \usepackage{times}
% \end{verbatim}
% \end{quote}
% (Alternatives like txfonts or newtx are also acceptable.)
% Please see the \LaTeX{} source of this document for comments on other packages that may be useful.
% Set the title and author using \verb|\title| and \verb|\author|. Within the author list, format multiple authors using \verb|\and| and \verb|\And| and \verb|\AND|; please see the \LaTeX{} source for examples.
% By default, the box containing the title and author names is set to the minimum of 5 cm. If you need more space, include the following in the preamble:
% \begin{quote}
% \begin{verbatim}
% \setlength\titlebox{<dim>}
% \end{verbatim}
% \end{quote}
% where \verb|<dim>| is replaced with a length. Do not set this length smaller than 5 cm.

\section{Characterization using transition monoids}\label{sec:transition_monoid}

We start with a swift introduction to algebraic automata theory, as presented by \citet{pin1997syntactic}.

\begin{definition}
    Consider a FSA $\automaton = \fsatuple$ and for each $\syma \in \alphabet$ define a binary relation ${T}^{(\syma)}$ over $\states$ with $T^{(\syma)}(\stateq) := \trans(\stateq, \syma)$ for all $\stateq \in \states$. A \defn{transition monoid} $\transmonoid{\automaton}$ is a binary relation monoid over $\states$ generated by $ \{ T^{(\syma)} \ | \ \syma \in \alphabet \}$ and closed under relation composition operator $\circ$ with the identity relation $\mathit{id}$.\looseness=-1
\end{definition}

It is often more convenient to think of $\transmonoid{\automaton}$ as a monoid of $\nstates \times \nstates$ matrices over Boolean semifield $\Boolean$ closed under Boolean matrix multiplication and the identity element $\sI$, such that:
\begin{equation}
    T^{(\syma)}_{i, j} = 1 \Leftrightarrow (\stateq_i, \stateq_j) \in T^{(\syma)}
\end{equation}

Notice that the monoid $\transmonoid{\automaton}$ is per construction closely related to the regular language $\languagefsa{\automaton}$. This property can be further formalized algebraicly.

\begin{definition}
    A language $\sL \subseteq \kleene{\alphabet}$ is \defn{recognized by a monoid} $\monoid$ of binary relations over $\states$ if there exists a surjective morphism $\morphismdef : \kleene{\alphabet} \rightarrow \monoid$ and an \defn{accepting subset} $\monoid_{\final} \subseteq \monoid$ such that $\sL = \morphismdef^{-1}(\monoid_{\final})$.\looseness=-1
\end{definition}

From the definition above it is immediately visible that $\transmonoid{\automaton}$ recognizes the language $\languagefsa{\automaton}$. Indeed, if we view $\initial$ and $\final$ as Boolean vectors we can define the accepting subset as 
\begin{equation}
    \monoidS_{\final} := \left\{ M \in \transmonoid{\automaton} \ | \ \initial^\top M \final \neq 0 \right\}
\end{equation}

\noindent and a morphism $\morphismdef$ for an arbitrary word $\str = \syma_1\syma_2\syma_3 \ldots \syma_m \in \kleene{\alphabet}$
\begin{equation}
    \morphism{\str} = \sI \cdot \prod^m_{i=1} T^{(\syma_i)} \in \transmonoid{\automaton}
\end{equation}

\noindent with $\morphism{\eps} = \sI$. As a consequence, the inverse $\morphismdef^{-1}(\monoidS_{\final})$ only contains words $\str \in \kleene{\alphabet}$ that are a yield of at least one path from $\initial$ to $\final$ in $\automaton$.

Due to equivalence between FSAs and their transition monoids there are many interconnected results. For instance, the smallest monoid, known as \defn{syntactic monoid}, recognizing some regular language $\sL$ is isomorphic to the transition monoid of the minimal deterministic FSA accepting $\sL$.

We consider the execution paradigm of \cref{alg:onthefly} through the lens of transition monoids. Notice, in every step the algorithm effectively multiplies the top element on the stack with transition matrices, and terminates as soon as no new power states can be produced.

\begin{observation}\label{obs:on-the-fly}
    The set of states $\detstates$ in the automaton $\detautomaton$ is the image space of $\initial$ under the transition monoid $\transmonoid{\automaton}$ i. e.
    \begin{equation}
        \detstates \!=\! \left\{ \powerstate \subseteq \states \ | \ \powerstate = \initial^\top M, \ M \in \transmonoid{\automaton} \right\} 
    \end{equation}
    
\end{observation}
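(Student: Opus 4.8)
The plan is to establish the set equality by proving the two inclusions separately, in each case exploiting the dictionary between the algorithm's set-valued successor operation and Boolean matrix multiplication. The first step is to record the translation lemma: if a power state $\powerstate$ is represented by the Boolean row vector $\initial^\top M$ for some $M \in \transmonoid{\automaton}$, then the successor computed in \cref{alg:onthefly} for a symbol $\syma$, namely $\powerstate' = \{\stateq' \mid (\stateq, \syma, \stateq') \in \trans,\ \stateq \in \powerstate\}$, is exactly $\initial^\top M\, T^{(\syma)} = \initial^\top (M\, T^{(\syma)})$. Since $\transmonoid{\automaton}$ is closed under composition, $M\, T^{(\syma)}$ again lies in $\transmonoid{\automaton}$, so the class of vectors of the form $\initial^\top M$ is stable under the successor operation.

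For the inclusion $\detstates \subseteq \{\initial^\top M \mid M \in \transmonoid{\automaton}\}$, I would induct on the order in which power states are added to $\detstates$. The base case is the initial power state $\initpowerstate = \initial^\top \sI$, which corresponds to the identity $M = \sI \in \transmonoid{\automaton}$. Every subsequently added power state arises as a successor $\powerstate'$ of some already-added $\powerstate = \initial^\top M$, and by the translation lemma $\powerstate' = \initial^\top (M\, T^{(\syma)})$, which completes the induction.

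For the reverse inclusion, I would use the fact that every $M \in \transmonoid{\automaton}$ can be written as $\sI \cdot \prod_{i=1}^m T^{(\syma_i)}$ for some word $\syma_1 \cdots \syma_m$ (the monoid being generated by the $T^{(\syma)}$), and induct on $m$. The key supporting fact, which I would isolate as a loop invariant, is that at termination $\detstates$ is \emph{closed under the successor operation}: every power state that is ever inserted into $\detstates$ is simultaneously pushed onto the stack, hence eventually popped, at which point all of its $\alphabet$-successors are inserted. Granting this, if $\initial^\top \prod_{i=1}^{m-1} T^{(\syma_i)}$ lies in $\detstates$, then so does $\initial^\top \prod_{i=1}^{m} T^{(\syma_i)}$, with the base case $m = 0$ being again the initial power state.

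The main obstacle is not the inductions themselves but pinning down this loop invariant rigorously: one must argue that the stack discipline together with the membership test $\powerstate' \notin \detstates$ guarantees both termination (there are at most $2^{\nstates}$ distinct power states, each pushed at most once) and exhaustiveness (no reachable power state is skipped). Once closure of $\detstates$ under successors is established, together with the identification of $\detstates$ as the smallest successor-closed set containing $\initpowerstate$, the equality with the monoid orbit $\{\initial^\top M \mid M \in \transmonoid{\automaton}\}$ follows, since that orbit is likewise the smallest set containing $\initial^\top \sI$ and closed under right-multiplication by the generators $T^{(\syma)}$.
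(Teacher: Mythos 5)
Your proposal is correct and is essentially a careful formalization of the paper's own (very terse) justification: the paper dispatches this as an Observation, noting only that each step of the algorithm multiplies the current power state by a transition matrix and that the loop terminates once no new power states appear. Your two-inclusion argument via the translation lemma and the successor-closure loop invariant is exactly the rigorous version of that remark, with no gaps.
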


This allows us to formulate a simple, but nevertheless interesting, connection between the state complexity of \cref{alg:onthefly} and the size of the transition monoid $\transmonoid{\automaton}$.\looseness=-1

\begin{theorem}\label{thm:monoid}
    Given a FSA $\automaton$ the equivalent deterministic automaton $\detautomaton$ computed by means of \cref{alg:onthefly} will satisfy

    \begin{equation}
        |\detstates| \leq |\transmonoid{\automaton}|
    \end{equation}
    
    \noindent and moreover its construction will require at most $\bigO{\nsymbols \nstates |\transmonoid{\automaton}|}$ steps.
\end{theorem}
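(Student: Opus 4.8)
The plan is to read off both claims from \cref{obs:on-the-fly}, which already identifies $\detstates$ as the image of $\initial$ under the transition monoid. The cardinality bound is the heart of the argument; the step-count bound then follows by combining it with the loop-counting estimate already sketched for \cref{alg:onthefly}.

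For the inequality $|\detstates| \leq |\transmonoid{\automaton}|$, I would make the map implicit in \cref{obs:on-the-fly} explicit. Define $\phi : \transmonoid{\automaton} \to \powerset{\states}$ by $\phi(M) = \initial^\top M$. By \cref{obs:on-the-fly} the image of $\phi$ is exactly $\detstates$, so $\phi$ is a surjection from $\transmonoid{\automaton}$ onto $\detstates$. Since a surjection cannot increase cardinality, $|\detstates| \leq |\transmonoid{\automaton}|$ follows immediately. The only point requiring care is that every power state produced by the algorithm really has the form $\initial^\top M$ for some $M \in \transmonoid{\automaton}$, and conversely every such vector is realized as a state — but this is exactly the content of \cref{obs:on-the-fly}, so it can be cited rather than reproved.

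For the step count, I would reuse the loop analysis already given for \cref{alg:onthefly}. Each power state is pushed onto the stack at most once, because the algorithm pushes $\powerstate'$ only when $\powerstate' \notin \detstates$; hence the while loop executes at most $|\detstates|$ times. Each iteration runs the inner loop over the $\nsymbols$ symbols of $\alphabet$, and for each symbol computing the successor power state $\powerstate'$ and testing its membership in $\detstates$ costs $\bigO{\nstates}$ — the same per-symbol estimate underlying the earlier $\bigO{\nsymbols \nstates 2^{\nstates}}$ worst case. Multiplying gives total work $\bigO{|\detstates| \, \nsymbols \, \nstates}$, and substituting the bound $|\detstates| \leq |\transmonoid{\automaton}|$ just established yields $\bigO{\nsymbols \nstates |\transmonoid{\automaton}|}$.

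I do not expect a genuine obstacle, since the substantive work lies in \cref{obs:on-the-fly}; the theorem is essentially a repackaging of that observation together with the elementary fact that a surjection bounds cardinality. The one place to be slightly careful is the per-iteration cost: the $\bigO{\nstates}$ estimate presupposes a representation in which the union of $\syma$-successors over a power state and the membership test against $\detstates$ run in time linear in $\nstates$ (for instance, treating power states as Boolean vectors and recording visited states in a hash set). I would simply adopt the same accounting convention already used in the paragraph following \cref{alg:onthefly}, under which the claimed bound follows cleanly.
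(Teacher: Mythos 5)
Your proposal is correct and follows essentially the same route as the paper: both arguments cite \cref{obs:on-the-fly} to see that every element of $\detstates$ is the image $\initial^\top M$ of some $M \in \transmonoid{\automaton}$ (a surjection, hence the cardinality bound), and both bound the runtime by noting that the \texttt{while}-loop runs once per state with at most $\nsymbols \nstates$ work per iteration. Your version merely spells out the surjection and the per-iteration accounting a bit more explicitly than the paper does.
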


\begin{proof}
    Following \cref{obs:on-the-fly} we conclude that since every element of $\detstates$ is the image of at least one relation in $\transmonoid{\automaton}$ we have $|\detstates| \leq | \transmonoid{\automaton} |$. Since the number of \texttt{while}-loop iterations in \cref{alg:onthefly} corresponds to the number of states, and in each iteration we consider at most $\nsymbols \nstates$ transitions the second claim follows.
\end{proof}

We remark that in general transition monoids tend to be very large, sometimes much larger than the actual number of states in the FSA. For $n$-state deterministic FSAs over a binary alphabet the syntactic monoid recognizing the same language can have the size of up to $n^n \left(1 - \frac{2}{\sqrt{n}} \right)$. A concise discussion of different exponential bounds can be found in \citet{holzer2004deterministic}.

As a result, to achieve polynomial bounds for \cref{thm:monoid} we will require additional constraints that we can impose on the automaton $\automaton$. In the latter sections we will present different approaches for different classes of automata.

\section{One-letter automata}\label{sec:one-letter}

In this section we consider a relatively simple class of automata whose alphabet consists only of a single symbol.
Such automata are often called \defn{one-letter} automata.
Let $\automaton_1 = (\{ \syma \}, \states_1, \trans_1, \initial_1, \final_1)$ be an arbitrary one-letter FSA. 
It is immediately clear that the transition monoid $\transmonoid{\automaton_1}$ of a one-letter FSA is generated entirely by a Boolean matrix $A$ defined as:\looseness=-1
\begin{equation}
    A_{i, j} = \left\{ \begin{array}{rl}
                    1 & \textit{if} \quad (\stateq_i, \syma, \stateq_j) \in \trans_1\\ 
                    0 & \textit{else} 
                    \end{array}\right.
\end{equation}

We follow the approach originally presented by \citet{Markowsky_1976} and analyze matrix semigroups generated by a single Boolean matrix.

\begin{definition}\label{def:index_period}
    The \defn{index} of a Boolean matrix $B$, denoted as $\indexM{B}$, is the least positive integer $k$ such that $B$ satisfies:
    \begin{equation}\label{eq:def-index-period}
        B^{k+d} = B^k \quad \text{for some $d \in \N_{>0}$}
    \end{equation}
    \noindent Similarly, the least possible choice of $d$ in the equation above given $k = \indexM{B}$ is the \defn{period} of $B$, denoted as $\periodM{B}$.
\end{definition}

Since there are exactly $2^{n^2}$ possible $n \times n$ Boolean matrices it immediately follows that every Boolean matrix has a finite index and period. 
Note that the size of the transition monoid satisfies
\begin{equation}\label{eq:one-letter-transition-monoid}
    |\transmonoid{\automaton_1}| \leq \indexM{A} + \periodM{A}
\end{equation}
because it contains all elements from the semigroup generated by $A$ and the identity element $\sI$.
Thus, to ensure a polynomial bound on the size of the transition monoid we require a polynomial bound both for the period and the index of $A$.

\begin{proposition}[\citet{Markowsky_1976}]\label{prop:bound-index}
    For an arbitrary $n \times n$ Boolean matrix $B$ we have:
    \begin{equation}
        \indexM{B} \leq n^2 - 2n + 2
    \end{equation}
\end{proposition}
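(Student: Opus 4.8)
The plan is to pass to the combinatorial model underlying Boolean matrix powers. I identify $B$ with the adjacency matrix of a directed graph $G$ on the vertex set $\{1, \ldots, n\}$, where there is an arc $i \to j$ precisely when $B_{ij} = 1$. Under Boolean matrix multiplication one has $(B^k)_{ij} = 1$ if and only if $G$ contains a walk of length exactly $k$ from vertex $i$ to vertex $j$. Since the powers $B, B^2, B^3, \ldots$ form an ultimately periodic sequence with minimal eventual period $p = \periodM{B}$, the matrix identity $B^{k+p} = B^k$ holds as soon as it holds entrywise; hence $\indexM{B} = \max_{i,j} \gamma_{ij}$, where $\gamma_{ij}$ is the least $k_0$ such that the $0/1$ sequence $\bigl((B^k)_{ij}\bigr)_{k \ge 1}$ agrees with its shift by $p$ for every $k \ge k_0$. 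It therefore suffices to bound, for each ordered pair $(i,j)$, the threshold $\gamma_{ij}$ beyond which the set of admissible walk lengths $W_{ij} = \{\, k : (B^k)_{ij} = 1 \,\}$ becomes periodic modulo $p$.

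First I would establish the pumping mechanism that forces periodicity. Any walk of length at least $n$ must revisit a vertex and hence contains a directed cycle, which can be inserted or (when already present) excised to shift the walk length by that cycle's length. The set of cycle lengths reachable along walks from $i$ to $j$ thus controls $W_{ij}$: by a Frobenius-type numerical-semigroup argument, once walk lengths are large enough, $W_{ij}$ is exactly a union of residue classes modulo the gcd of the relevant cycle lengths, and this gcd is reconciled with the global period $p$ through the strongly connected component decomposition of $G$. The cleanest route is to treat the primitive case first — $G$ strongly connected and aperiodic — where periodicity means $W_{ij}$ eventually contains every sufficiently large integer and $\gamma_{ij}$ reduces to the classical Wielandt exponent, and then lift to general $G$ via its condensation into strongly connected components, bounding the contribution of the acyclic backbone between components separately from the cyclic contribution inside them.

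The main obstacle is the quantitative estimate yielding the exact constant $n^2 - 2n + 2 = (n-1)^2 + 1$ rather than a loose $\bigO{n^2}$. This requires pairing the two factors of size $n-1$ correctly: one factor comes from the longest repetition-free backbone of a walk (at most $n-1$ arcs, since a simple path visits at most $n$ vertices), and the other from the worst-case delay before the cyclic structure saturates a residue class (governed by a shortest cycle together with the Frobenius number of the cycle-length set). Carefully optimizing this pairing — and verifying that combining the per-component thresholds across the condensation never exceeds the single-component bound — is the delicate step; tightness is witnessed separately by the Wielandt matrix, which attains $\indexM{B} = (n-1)^2 + 1$. Everything else, namely the graph reduction and the pumping lemma, is routine once the exact threshold accounting is in place.
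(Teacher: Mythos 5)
The paper does not prove this proposition at all: it is imported verbatim from \citet{Markowsky_1976}, so the comparison here is between your sketch and the literature it would have to replace. Your high-level plan is the standard one --- read $B$ as a digraph, reduce the matrix identity $B^{k+p}=B^k$ to eventual periodicity of each walk-length set $W_{ij}$, handle the primitive case via Wielandt's theorem, and lift to general $B$ through the condensation into strongly connected components --- and nothing in it is wrongheaded.

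As written, however, it is a roadmap rather than a proof, and the gap sits exactly where you flag it. Two estimates are deferred. First, even in the primitive case the sharp constant is not just a matter of ``pairing two factors of size $n-1$'': the usual argument gives an exponent bound of $n+s(n-2)$ where $s$ is a shortest cycle length, and naively $s\le n$ yields only $n^2-n$; one must observe that $s=n$ forces $\graph(B)$ to be a single Hamiltonian cycle (hence imprimitive), so $s\le n-1$ and the bound becomes $n+(n-1)(n-2)=n^2-2n+2$. Second, and more seriously, the proposition concerns \emph{arbitrary} Boolean matrices, and your claim that ``combining the per-component thresholds across the condensation never exceeds the single-component bound'' is precisely the content of Markowsky's paper, not a routine verification: reducible matrices whose periodic components interact through an acyclic backbone require their own induction (this is where Markowsky, following Schwarz, does the real work, and where entries $(B^k)_{ij}$ that are eventually constant $0$ or belong to different residue structures must be reconciled with the single global period $p$). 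Until those two estimates are carried out, the inequality $\indexM{B}\le n^2-2n+2$ has not been established; the tightness remark about the Wielandt matrix is correct but does not bear on the upper bound.
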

\cref{prop:bound-index} implies that the index of an arbitrary matrix will be relatively small, thus the main complexity always comes from the period.

\citet{denes1983automata} investigated lower bounds connected to periods of Boolean matrices and showed that, in general, they can be exponentially large. Moreover, they demonstrated how their bounds  translate directly into the number of states in a minimal equivalent deterministic one-letter automata, thus confirming that certain one-letter automata will always have exponential state complexity.
To achieve a polynomial bound on the size of $\transmonoid{\automaton_1}$ we restrict the discussion to a constrained class of transition matrices.

\begin{definition}\label{def:irreducible}
    A Boolean matrix $B$ is called \defn{irreducible} if and only if there does not exists a permutation matrix $P$ such that $PBP^\top$ is a block lower triangular matrix.
\end{definition}

This condition is equivalent to the directed graph defined by adjacency matrix $A$, known as the \defn{precedence graph} $\graph(A)$, being strongly connected. 
For more details on this observation we refer to \citet{brualdi1991combinatorial}.

\begin{proposition}\label{prop:bound-period}
    Given an arbitrary irreducible $n \times n$ Boolean matrix $R$ it holds:
    \begin{equation}
        \periodM{R} \leq n
    \end{equation}
\end{proposition}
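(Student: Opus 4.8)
The plan is to translate the algebraic statement about the period of $R$ into a combinatorial statement about cycle lengths in the strongly connected precedence graph $\graph(R)$. By \cref{def:irreducible} and the observation following it, irreducibility of $R$ is equivalent to $\graph(R)$ being strongly connected, so I would work entirely with this digraph on $n$ vertices. The central quantity is the \emph{cyclicity} (index of imprimitivity) $h$ of $\graph(R)$, defined as the greatest common divisor of the lengths of all directed cycles in $\graph(R)$. I would first argue $\periodM{R} = h$, or rather the weaker $\periodM{R} \mid h$, and then bound $h \leq n$.

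For the first part, recall that $(R^k)_{ij} = 1$ if and only if there is a walk of length $k$ from $\stateq_i$ to $\stateq_j$ in $\graph(R)$. The key structural fact is that a strongly connected digraph with cyclicity $h$ admits a partition of its vertices into classes $C_0, \ldots, C_{h-1}$ such that every edge leads from $C_t$ to $C_{(t+1) \bmod h}$. Consequently every walk from a vertex in $C_s$ to a vertex in $C_t$ has length $\equiv t - s \pmod{h}$, so the existence pattern of walks — and hence the entire matrix $R^k$ — can depend on $k$ only through $k \bmod h$ once $k$ is large. Making this precise yields $R^{k+h} = R^k$ for every $k \geq \indexM{R}$. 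Since, by \cref{def:index_period}, the powers $R^{\indexM{R}}, R^{\indexM{R}+1}, \ldots, R^{\indexM{R}+\periodM{R}-1}$ are distinct and cycle with least period $\periodM{R}$, the relation $R^{k+d} = R^k$ with $k = \indexM{R}$ holds exactly when $\periodM{R} \mid d$; applying this with $d = h$ gives $\periodM{R} \mid h$, and in particular $\periodM{R} \leq h$.

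It then remains to bound $h \leq n$. Here I would use that a strongly connected digraph on $n$ vertices always contains a directed cycle, and in fact a simple one, whose length is at most $n$. Since $h$ is the greatest common divisor of all cycle lengths, $h$ divides the length of this short cycle, hence $h \leq n$. Combining this with $\periodM{R} \leq h$ yields $\periodM{R} \leq n$, as claimed.

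The main obstacle is establishing $R^{k+h} = R^k$ for large $k$, i.e.\ that between every ordered pair of vertices a walk of \emph{every} sufficiently large length in the correct residue class modulo $h$ actually exists. This is where strong connectivity is used quantitatively: the set of lengths of closed walks at a fixed vertex is closed under addition and has greatest common divisor $h$, so by a numerical-semigroup argument it contains all sufficiently large multiples of $h$; padding a base walk with such closed walks then realizes all large admissible lengths. The remaining ingredients — the cyclic partition of the vertices and the existence of a simple cycle of length at most $n$ — are standard facts about strongly connected digraphs and can be cited from \citet{brualdi1991combinatorial}.
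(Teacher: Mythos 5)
Your argument is correct and takes essentially the same route as the paper: reduce the period of $R$ to the cyclicity of the strongly connected precedence graph $\graph(R)$ and bound that cyclicity by $n$ via a cycle of length at most $n$. The only difference is that the paper simply cites \citet{DeSDeM:98-20} for the fact that the period equals the cyclicity, whereas you sketch a self-contained proof of the divisibility half of that fact (via the cyclic partition and a numerical-semigroup padding argument), which is all the upper bound needs.
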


\begin{proof}
    The period of an arbitrary Boolean matrix $R$ was shown by \citet{DeSDeM:98-20} to be equal to the least common multiple of cyclicities of all maximal strongly connected components of $\graph(R)$. Since $R$ is irreducible, $\graph(R)$ is strongly connected, thus its cyclicity can be trivially bounded by $n$ (length of the largest cycle) from above.\looseness=-1
\end{proof}

This allows us to establish a simple polynomial bound for FSAs with irreducible transition matrices.\looseness=-1
\begin{lemma}\label{lem:one-letter}
    An equivalent one-letter deterministic automata computed by applying \cref{alg:onthefly} to automaton $\automaton_1$ with an irreducible transition matrix $A$ and $|\states_1| = n$ will have at most $n^2 - n + 2$ states.\looseness=-1
\end{lemma}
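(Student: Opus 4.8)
The plan is to chain together the bounds already established in the paper, since this lemma is essentially a corollary of \cref{thm:monoid} combined with the index and period bounds for Boolean matrices. First I would invoke \cref{thm:monoid} to reduce the problem to bounding the size of the transition monoid: the deterministic automaton produced by \cref{alg:onthefly} satisfies $|\detstates| \leq |\transmonoid{\automaton_1}|$. Because $\automaton_1$ is a one-letter automaton, its transition monoid is generated by the single matrix $A$ together with the identity, so by \cref{eq:one-letter-transition-monoid} we have $|\transmonoid{\automaton_1}| \leq \indexM{A} + \periodM{A}$. Thus the entire task collapses to bounding the index and the period of $A$.

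Next I would bound each term separately and add them. The index term is handled directly by \cref{prop:bound-index}, which holds for any $n \times n$ Boolean matrix and hence for $A$, giving $\indexM{A} \leq n^2 - 2n + 2$. For the period term I would use the irreducibility hypothesis on $A$: since $A$ is irreducible, \cref{prop:bound-period} applies and yields $\periodM{A} \leq n$. Summing the two and simplifying $n^2 - 2n + 2 + n = n^2 - n + 2$ then produces the stated bound on $|\detstates|$.

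There is no deep obstacle here, as the substantive work has already been done in the preceding propositions; the crux is simply verifying that the hypotheses of each cited result are met in this setting. The one point that deserves a moment of care is the counting underlying \cref{eq:one-letter-transition-monoid}: the semigroup generated by $A$ contains exactly $\indexM{A} + \periodM{A} - 1$ distinct powers, and adjoining the identity $\sI$ contributes at most one additional element, which is precisely the slack that lets the identity be absorbed without inflating the bound. It is also worth noting explicitly that the irreducibility assumption is what does the real work, since it is exactly the condition that replaces the generally exponential period bound of \citet{denes1983automata} with the linear bound of \cref{prop:bound-period}; without it the index would still be polynomial but the period could be exponential. Confirming these two observations completes the argument.
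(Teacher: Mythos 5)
Your proposal is correct and follows exactly the paper's own argument: substitute \cref{prop:bound-index} and \cref{prop:bound-period} into \cref{eq:one-letter-transition-monoid} and conclude via \cref{thm:monoid}. Your added remark about the semigroup having $\indexM{A} + \periodM{A} - 1$ distinct powers plus the identity is a correct and slightly more careful justification of \cref{eq:one-letter-transition-monoid} than the paper spells out, but it does not change the route.
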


\begin{proof}
    By inserting \cref{prop:bound-period} and \cref{prop:bound-index} into \cref{eq:one-letter-transition-monoid} we get the desired bound. The claim follows by \cref{thm:monoid}.
\end{proof}

We emphasize that while irreducibility of $A$ is sufficient for polynomial execution time of \cref{alg:onthefly}, it is by no means necessary. A more involved bound on the period of $A$ and its relation to cyclicity of $\graph(A)$ can be found in \citet{DeSDeM:98-20}.

\section{Commutative automata}\label{sec:commuting_automata}

We investigate how many properties from one-letter automata carry over to the general case. We remark that just requiring irreducibility for all transition matrices in $\automaton$ no longer suffices. Indeed, the product of irreducible matrices is no longer irreducible, thus we can easily create interim matrices with large period and blow-up the size of the transition monoid.

However for some cases $|\transmonoid{\automaton}|$ can still be bounded. Consider $\sT := \{ T^{(\syma)} \ | \ \syma \in \alphabet \}$ the set of all transition matrices of $\automaton$. Assume that each matrix satisfies:
\begin{equation}\label{eq:product-transition}
    \quad T^{(\syma_i)} = {\left(T^{(\syma_1)}\right)}^{k_i} \quad \text{for some }  k_i \in \N
\end{equation}

In this case we can just apply the bound from \cref{eq:one-letter-transition-monoid} on $T^{(\syma_1)}$ directly, since additional letters don't affect the size of $\transmonoid{\automaton}$.

One can extend the argument to a more general class of automata. We start with an assumption that all transition matrices in $\sT$ commute; this is a direct consequence of \cref{eq:product-transition}. 
Then the order of multiplication between the elements of $\sT$ is not relevant, thus one can show a tighter bound on the size of $\transmonoid{\automaton}$.

\begin{theorem}\label{thm:commutative}
    Given an automaton $\automaton$ let for all $i$ the size of the transition monoid generated by $T_i \in \sT$ be bounded by $f_i(n)$. Further assume that all matrices in $\sT$ commute with respect to Boolean matrix multiplication. Then:
    \begin{equation}
        |\transmonoid{\automaton}| \leq \prod_{i=1}^{\nsymbols} f_i(n)
    \end{equation}
     which is also the bound on the state complexity of \cref{alg:onthefly} applied on $\automaton$.
\end{theorem}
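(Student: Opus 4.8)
The plan is to use commutativity to place every element of the transition monoid into a canonical form and then simply count the number of such forms. First I would recall that, by definition, every element of $\transmonoid{\automaton}$ is a finite product of generators drawn from $\sT = \set{T_1, \ldots, T_{\nsymbols}}$ (with the empty product equal to the identity $\sI$). Since all generators commute under Boolean matrix multiplication, any such product may be freely reordered; collecting together the occurrences of each generator shows that every $M \in \transmonoid{\automaton}$ can be written in the grouped form
\[
    M = T_1^{a_1} T_2^{a_2} \cdots T_{\nsymbols}^{a_{\nsymbols}}, \qquad a_1, \ldots, a_{\nsymbols} \in \N .
\]

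Next I would introduce, for each generator, the cyclic submonoid $\monoidS_i \defeq \set{T_i^{a} \mid a \in \N}$ generated by $T_i$ alone (with $T_i^0 = \sI$). Its cardinality is exactly the size of the transition monoid generated by $T_i$, and hence is bounded by $f_i(n)$ by hypothesis. The grouped form above shows that the product map
\[
    \phi : \monoidS_1 \times \cdots \times \monoidS_{\nsymbols} \to \transmonoid{\automaton}, \qquad \phi(M_1, \ldots, M_{\nsymbols}) = M_1 \cdots M_{\nsymbols}
\]
is surjective, since $M = T_1^{a_1}\cdots T_{\nsymbols}^{a_{\nsymbols}}$ is the image of the tuple $\left( T_1^{a_1}, \ldots, T_{\nsymbols}^{a_{\nsymbols}} \right) \in \monoidS_1 \times \cdots \times \monoidS_{\nsymbols}$. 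Surjectivity immediately yields
\[
    |\transmonoid{\automaton}| \leq \prod_{i=1}^{\nsymbols} |\monoidS_i| \leq \prod_{i=1}^{\nsymbols} f_i(n),
\]
and the stated bound on the state complexity of \cref{alg:onthefly} then follows directly from \cref{thm:monoid}, which gives $|\detstates| \leq |\transmonoid{\automaton}|$.

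The only genuinely non-routine step is the first one, namely justifying the grouped form. The subtlety is that the exponent of a given generator inside an arbitrary product is not a priori a well-defined quantity (only the resulting matrix $T_i^{a_i}$ matters), so one must verify that \emph{pairwise} commutativity of the generators suffices to sort a product of arbitrary length into the grouped order. This is a straightforward induction on the length of the product, repeatedly transposing adjacent commuting factors. Once the grouped form is established, the remainder is a pure cardinality count and presents no difficulty. I would also note that the resulting inequality is typically far from tight, since distinct tuples may collapse to the same Boolean matrix under $\phi$; however, surjectivity of $\phi$ is all that the bound requires.
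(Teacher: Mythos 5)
Your proposal is correct and follows essentially the same route as the paper: use commutativity to rewrite every $M \in \transmonoid{\automaton}$ as a grouped product $\prod_{i=1}^{\nsymbols} T_i^{a_i}$, observe that each factor ranges over at most $f_i(n)$ distinct matrices, and conclude by counting tuples (the paper phrases your surjection $\phi$ as an encoding of monoid elements by tuples in $\bigtimes_{i=1}^{\nsymbols}\{0,\ldots,f_i(n)-1\}$), then invoke \cref{thm:monoid}. Your explicit remark that pairwise commutativity suffices to sort an arbitrary-length product is a point the paper leaves implicit, but it does not change the argument.
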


\begin{proof}
    Consider an arbitrary element $M \in \transmonoid{\automaton}$. Using commutativity of $\sT$ we can write $M$ as a product of transition matrices as following:
    \begin{equation}
        M = \prod_{i = 1}^{\nsymbols} {\left(T^{(\syma_i)}\right)}^{k_i} \quad \text{for some $k_i \in \N$}
    \end{equation}
    Since all matrices $T^{(\syma_i)}$ generate a monoids of bounded size we can have at most $f_i(n)$ distinct values of ${\left(T^{(\syma_i)}\right)}^k$, thus every element from $\transmonoid{\automaton}$ can be encoded by at least one element from $\bigtimes_{i = 1}^{\nsymbols} \bigl\{0, \ldots, f_i(n)-1 \bigr\}$. 
\end{proof}

\begin{corollary}\label{corr:commutative}
    Assuming the preconditions from \cref{thm:commutative} and irreducibility of all transition matrices in $\sT$ the \cref{alg:onthefly} applied to $\automaton$ would have a state complexity of at most $n^{2\nsymbols}$.
\end{corollary}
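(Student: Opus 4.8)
The plan is to simply chain the two bounds already in hand: the multiplicative estimate of \cref{thm:commutative} and the single-generator estimate underlying \cref{lem:one-letter}. Because the corollary inherits all the hypotheses of \cref{thm:commutative} (the generators commute and each generates a monoid of size at most $f_i(n)$), the first step would be to invoke that theorem verbatim, giving $|\transmonoid{\automaton}| \leq \prod_{i=1}^{\nsymbols} f_i(n)$, where $f_i(n)$ bounds the size of the cyclic monoid generated by the single matrix $T^{(\syma_i)}$. It then remains only to supply a concrete value for each $f_i(n)$ and to simplify the resulting product.

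For the second step I would instantiate $f_i(n)$ by applying the one-letter analysis to each generator separately. This is precisely where the added irreducibility hypothesis enters: since every $T^{(\syma_i)} \in \sT$ is irreducible, \cref{prop:bound-period} gives $\periodM{T^{(\syma_i)}} \leq n$ and \cref{prop:bound-index} gives $\indexM{T^{(\syma_i)}} \leq n^2 - 2n + 2$; substituting both into \cref{eq:one-letter-transition-monoid} yields $f_i(n) \leq n^2 - n + 2$, exactly the quantity appearing in \cref{lem:one-letter}. I would emphasize that irreducibility is needed of each generator \emph{individually}, not of arbitrary products of generators — the latter can fail, as noted at the start of \cref{sec:commuting_automata}. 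What rescues the argument is the commutative structure exploited in the proof of \cref{thm:commutative}: every element of $\transmonoid{\automaton}$ is already a product of powers of the generators, so only the per-generator monoid sizes $f_i(n)$ govern the total.

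The final step is a one-line arithmetic simplification. For $n \geq 2$ we have $n^2 - n + 2 \leq n^2$, hence $\prod_{i=1}^{\nsymbols} f_i(n) \leq \left(n^2\right)^{\nsymbols} = n^{2\nsymbols}$. Combining this with \cref{thm:monoid}, which already gives $|\detstates| \leq |\transmonoid{\automaton}|$, closes the argument. I do not expect a genuine obstacle here; the only point demanding a moment's care is the convention $n \geq 2$ (for $n = 1$ the per-generator bound $n^2 - n + 2 = 2$ exceeds the clean target $n^{2\nsymbols} = 1$), which is harmless since the automata of interest all have at least two states.
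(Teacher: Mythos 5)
Your proposal is correct and follows essentially the same route as the paper: bound each $f_i(n)$ by $n^2 - n + 2 \leq n^2$ via the one-letter analysis under irreducibility, then multiply using \cref{thm:commutative}. The paper's proof is just a terser version of yours (it cites \cref{lem:one-letter} directly rather than re-deriving the per-generator bound), and it likewise notes the $n \geq 2$ convention.
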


\begin{proof}
    We notice by \cref{lem:one-letter} that for an arbitrary $i$ we have $f(i) \leq n^2 - n + 2 \leq n^2$ (given $n \geq 2$). The claim follows.
\end{proof}

Notice that the conditions from \cref{thm:commutative} essentially enforce that the transition monoid $\transmonoid{\automaton}$ is commutative. Automata with commutative transition monoids correspond to commutative languages, which have other interesting algebraic properties. 
\citet{hoffmann2019commutative} presents a survey of different properties of commutative languages and demonstrates that a ``shuffle'' of two regular commutative languages can be performed in polynomial time, something that has exponential complexity for general languages. They also derive a similar bound to \cref{thm:commutative} for the state complexity of the minimal deterministic automaton via underlying one-letter languages. However they utilize a different, non-algebraic approach without establishing the link to on-the-fly algorithm.\looseness=-1

\section{Strongly connected automata}\label{sec:strongly-connected}

In this section we take a different approach and extend the notion of irreducibility to bound the state complexity of automata by introducing additional constraints on their transitions. 
\begin{definition}
    An $n \times n$ Boolean matrix $B$ is called $r$-\defn{indecomposable} if and only if there do not exist permutation matrices $P$ and $Q$ such that $PBQ$ has a top right $\0$-block of size $s \times t$ with $s + t + r > n$:
    \begin{equation}
        PBQ \neq 
        \begin{bmatrix}
            \ast & \0 \\
            \ast & \ast
        \end{bmatrix}
        \quad \forall P, Q \in \permgroup
    \end{equation}
Moreover we call $B$ $r$-\defn{irreducible} if the precedence graph $\graph(B)$ is $r$-\defn{connected}, i.e., one can remove arbitrary $r-1$ vertices from $\graph(B)$ and it will remain strongly connected.\looseness=-1
\end{definition}
Indecomposable matrices are related to standard irreducible matrices. 
Specifically, every irreducible matrix is $0$-indecomposable and every $1$-indecomposable matrix is irreducible. 
In general one can shown that both of these notions are very similar, i.e., by adding some minor conditions (positivity of the main diagonal) one can derive an equivalence.
\begin{lemma}[\citet{YouIndecomposable}]\label{lem:r-strongly-connected}
    Suppose a Boolean matrix $B$ has only ones along the main diagonal. 
    Then, for $r > 0$, it is $r$-indecomposable if and only if its precedence graph $\graph(B)$ is $r$-connected (i.e., $B$ is $r$-irreducible).
\end{lemma}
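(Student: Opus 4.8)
The plan is to translate both the algebraic condition ($r$-indecomposability) and the graph-theoretic condition ($r$-connectivity of $\graph(B)$) into a single combinatorial statement about all-zero rectangular submatrices, and then read off the equivalence from a reachability argument. First I would record the routine observation that the existence of $P, Q \in \permgroup$ realizing a top-right $s \times t$ zero block of $PBQ$ is exactly the existence of a row set $S$ and a column set $T$ of the vertex set $V$ of $\graph(B)$, with $|S| = s$, $|T| = t$, and $B_{ij} = 0$ for all $i \in S$, $j \in T$; the permutations merely gather $S$ into the top rows and $T$ into the rightmost columns. Hence $B$ is $r$-indecomposable iff every such all-zero block satisfies $s + t \le n - r$, whereas $r$-irreducibility says that no set of $\le r - 1$ vertices separates $\graph(B)$.

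The hinge of the argument — and the only place the unit-diagonal hypothesis enters — is the following. If $B[S,T] = \0$ with $S, T$ nonempty, then $S \cap T = \varnothing$, since $i \in S \cap T$ would force $B_{ii} = 0$, contradicting the diagonal of ones. Writing $C := V \setminus (S \cup T)$, the absence of edges from $S$ to $T$ means precisely that every directed path from $S$ to $T$ must meet $C$: following such a path to the first vertex it lands in $T$, its predecessor would otherwise lie in $S$ and supply a forbidden $S \to T$ edge. Thus an all-zero block $B[S,T] = \0$ corresponds exactly to $C$ being a directed vertex separator with $|C| = n - s - t$.

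With this dictionary both directions are short, and I would argue each contrapositively. If $\graph(B)$ is not $r$-connected, fix a separating set $C$ with $|C| \le r - 1$ and a pair $u, v$ of $\graph(B) - C$ with no $u \to v$ path there; taking $S$ to be the vertices reachable from $u$ in $\graph(B) - C$ and $T := (V \setminus C) \setminus S$ yields nonempty disjoint sets with no $S \to T$ edge, hence $B[S,T] = \0$ with $s + t = n - |C| \ge n - r + 1$, so $B$ is not $r$-indecomposable. Conversely, if $B$ is not $r$-indecomposable, pick $S, T$ with $B[S,T] = \0$ and $s + t \ge n - r + 1$; then $C = V \setminus (S \cup T)$ is a separator of size $\le r - 1$ whose removal leaves some $s_0 \in S$ with no path to any $t_0 \in T$, so $\graph(B) - C$ fails to be strongly connected and $B$ is not $r$-irreducible.

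The step I expect to require the most care is the bookkeeping around the ``$+r$'' slack and the degenerate regimes, rather than the core idea. One must ensure $S$ and $T$ are genuinely nonempty (so that $s, t \ge 1$ and a true top-right block exists), confirm that removing a separator of size strictly less than $r$ already refutes $r$-connectivity under the paper's convention, and dispatch the boundary cases ($r \ge n$, or $B$ the all-ones matrix, where $\graph(B) - C$ may be too small for strong connectivity to be meaningful) either by exclusion or by a direct check. Everything else follows immediately from the reachability correspondence above; one could alternatively route the separator-to-connectivity step through Menger's theorem, but the explicit reachability-set construction makes this unnecessary.
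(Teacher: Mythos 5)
Your argument is correct, but note that the paper does not actually prove this lemma at all --- it is imported verbatim from \citet{YouIndecomposable} --- so there is no internal proof to compare against; what you have written is a self-contained elementary derivation of the cited result. The dictionary you set up (a top-right $s \times t$ zero block of $PBQ$ $\leftrightarrow$ row/column sets $S, T$ with $B[S,T] = \0$ $\leftrightarrow$ the complement $C = V \setminus (S \cup T)$ separating $S$ from $T$ in $\graph(B)$) is the right reduction, and you correctly isolate where the unit-diagonal hypothesis does its work: it forces $S \cap T = \varnothing$, which is what makes $|C| = n - s - t$ and lets the bookkeeping $s + t + r > n \Leftrightarrow |C| \leq r - 1$ go through. (Indeed, only the direction ``$r$-connected $\Rightarrow$ $r$-indecomposable'' needs the diagonal; the reachable-set construction for the other direction uses disjointness that holds automatically, which is consistent with the paper's remark that every $1$-indecomposable matrix is irreducible without any diagonal assumption.) Both contrapositive directions check out --- the first-vertex-in-$T$ argument shows every $S \to T$ path meets $C$, and the reachability-closure choice of $S$ guarantees no $S \to T$ edge survives in $G - C$ --- and the caveats you flag (nonemptiness of $S$ and $T$, reading ``remove arbitrary $r-1$ vertices'' as ``any set of at most $r-1$ vertices,'' and the degenerate regime $r \geq n$ where $G - C$ may shrink to a single vertex) are exactly the points requiring care and are all dispatched as you indicate.
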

\cref{lem:r-strongly-connected} allows us to verify whether a transition matrix is $r$-indecomposable in polynomial time. Indeed, notice that indecomposability of a matrix is invariant under permutations. Thus, assuming we permuted the matrix $A$ such that the main diagonal consists entirely of ones we only have to check whether the underlying graph $\graph(A)$ is $r$-connected. 

\begin{observation}
    Verifying graph connectivity is a well-known instance of maximum flow problem. A directed graph $\graph = (V, E)$ is $r$-connected if and only if for every pair of vertices $u, v \in V$ with $(u, v) \notin E$ there exist at least $r$ vertex disjoint directed paths from $u$ to $v$.
\end{observation}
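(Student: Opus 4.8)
The plan is to recognize the claimed equivalence as the directed, vertex-connectivity form of \emph{Menger's theorem}, and to deploy it together with the definition of $r$-connectivity given above (removing any $r-1$ vertices leaves $\graph$ strongly connected). The opening sentence — that verifying connectivity reduces to a maximum-flow computation — follows from the standard vertex-splitting construction: replace each vertex $w$ by an arc $w_{\text{in}} \to w_{\text{out}}$ of unit capacity, route all incoming arcs into $w_{\text{in}}$ and all outgoing arcs out of $w_{\text{out}}$ with infinite capacity, and compute a maximum integral $u$-$v$ flow; by the max-flow/min-cut theorem its value equals both the maximum number of internally vertex-disjoint $u$-$v$ paths and the minimum size of a vertex set whose deletion separates $u$ from $v$. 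The combinatorial core I must establish is the stated ``if and only if''.

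For the forward direction I would argue by contraposition through Menger's theorem. Assume $\graph$ is $r$-connected and fix a non-adjacent pair $u, v$ with $(u,v) \notin E$. If fewer than $r$ internally vertex-disjoint $u$-$v$ paths existed, then by Menger there would be a separating set $S \subseteq V \setminus \{u, v\}$ with $|S| \le r-1$ such that $\graph - S$ contains no directed $u$-$v$ path. Enlarging $S$ by arbitrary additional vertices distinct from $u, v$ (which only destroys further paths and so keeps it separating) we may assume $|S| = r-1$, whereupon the definition of $r$-connectivity forces $\graph - S$ to be strongly connected and in particular to contain a $u$-$v$ path — contradicting that $S$ separates them. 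Hence at least $r$ vertex-disjoint paths exist for every non-adjacent pair.

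For the backward direction I would use a direct counting argument. Suppose every non-adjacent pair admits $r$ internally vertex-disjoint directed paths, and let $S \subseteq V$ be an arbitrary set with $|S| = r-1$; I claim $\graph - S$ is strongly connected. Take any $u, v \in V \setminus S$. If $(u,v) \in E$, the arc itself survives the deletion (neither endpoint lies in $S$), giving a path. If $(u,v) \notin E$, choose the guaranteed collection of $r$ paths from $u$ to $v$; since these are internally disjoint, each vertex of $S$ lies on at most one of them, so $S$ meets at most $r-1 < r$ of the paths and at least one remains intact in $\graph - S$. In either case $u$ reaches $v$ in $\graph - S$, so the residual graph is strongly connected and $\graph$ is $r$-connected.

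The main obstacle is not the counting — which is elementary — but the appeal to Menger's theorem in the forward direction: the equality between the minimum vertex separator and the maximum family of internally disjoint paths is precisely the nontrivial classical result, which I would either cite or reprove via the max-flow integrality argument sketched above. A secondary point requiring care is the restriction to non-adjacent pairs $(u,v) \notin E$: an arc can never be destroyed by deleting vertices, so vertex connectivity is only genuinely constrained by the non-adjacent pairs, which is exactly why the hypothesis is phrased this way.
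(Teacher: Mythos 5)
Your proposal is correct, and it supplies exactly the argument the paper leaves implicit: the paper states this Observation without proof, treating it as the well-known directed vertex form of Menger's theorem together with the max-flow/min-cut reduction via vertex splitting, which is precisely what you invoke for the forward direction and the elementary disjoint-paths counting you give for the backward direction. The only points needing the care you already take are reading ``vertex disjoint'' as \emph{internally} vertex disjoint and padding the Menger separator to size exactly $r-1$ while avoiding $u$ and $v$, so there is nothing to add.
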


Due to the almost linear time algorithm for calculating maximum flow in a graph by \citet{ChenMaxFlow} the $r$-connectivity of $\graph(A)$ can be verified in $\bigO{n^2 \cdot |A|^{1 + o(1)}}$ steps where $|A|$ refers to the number of non-zero entries in $A$.

\begin{proposition}[\citet{kim1978two}] \label{prop:indecomposable}
    An $n \times n$ Boolean matrix $B$ is $r$-indecomposable if and only if for any non-zero Boolean (row) vector $v$ it holds:
    \begin{equation}
        |vB| \geq \min{\{n, \ | v | + r \}}
    \end{equation}
\end{proposition}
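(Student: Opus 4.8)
The plan is to translate both sides of the equivalence into the language of neighborhoods in the bipartite graph associated with $B$, where the combinatorial core becomes transparent. First I would read $B$ as a bipartite adjacency matrix between a set of ``row'' indices and a set of ``column'' indices, both equal to $\{1, \dots, n\}$, with an edge from $i$ to $j$ whenever $B_{ij} = 1$. For a Boolean row vector $v$ with support $S := \mathrm{supp}(v)$, the Boolean product satisfies $(vB)_j = \bigvee_{i} v_i B_{ij}$, so $j \in \mathrm{supp}(vB)$ precisely when some $i \in S$ has $B_{ij} = 1$. Hence $\mathrm{supp}(vB) = N(S)$, where $N(S) := \{ j : B_{ij} = 1 \text{ for some } i \in S \}$, and therefore $|vB| = |N(S)|$. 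Under this dictionary, the inequality $|vB| \geq \min\{n, |v| + r\}$ for all nonzero $v$ is exactly the expansion statement that $|N(S)| \geq \min\{n, |S| + r\}$ for every nonempty $S \subseteq \{1, \dots, n\}$.

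Next I would recast $r$-indecomposability in the same language. Since $P$ and $Q$ merely relabel rows and columns, $PBQ$ has a top-right $s \times t$ zero block if and only if there exist a set $R$ of $s$ rows and a set $C$ of $t$ columns with $B_{ij} = 0$ for all $i \in R$, $j \in C$, that is, with $N(R) \cap C = \varnothing$, or equivalently $C \subseteq \{1, \dots, n\} \setminus N(R)$. For a fixed $R$ the largest admissible $C$ is the whole complement $\{1, \dots, n\} \setminus N(R)$, of size $n - |N(R)|$, and a genuine block requires $t \geq 1$, i.e. $|N(R)| \leq n-1$. Consequently $B$ fails to be $r$-indecomposable iff there is a nonempty $R$ together with a block of dimensions $s = |R|$, $t = n - |N(R)| \geq 1$ satisfying $s + t + r > n$; substituting $t$, this last inequality simplifies to $|N(R)| < |R| + r$.

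Then I would simply collect the two conditions. The pair $|N(R)| \leq n-1$ and $|N(R)| < |R| + r$ is equivalent to the single inequality $|N(R)| < \min\{n, |R| + r\}$: when $|R| + r \geq n$ the second condition is implied by the first, and when $|R| + r < n$ the first is implied by the second, so in both regimes the combined condition is exactly the truncated one. Thus $B$ is not $r$-indecomposable precisely when some nonempty $R$ violates the expansion statement, and taking contrapositives yields the claimed equivalence.

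The step I expect to require the most care is the correspondence in the second paragraph: one must argue that ranging over all pairs $(P, Q) \in \permgroup \times \permgroup$ and all block sizes is the same as ranging over all pairs of subsets $(R, C)$, and that choosing the maximal $C = \{1,\dots,n\}\setminus N(R)$ loses no generality while correctly enforcing the boundary constraints $s, t \geq 1$ (equivalently, $R$ nonempty and $|N(R)| \leq n-1$). These degenerate cases are not cosmetic: allowing $s = 0$ or $t = 0$ would make every matrix spuriously decomposable, so pinning down that $s, t \geq 1$ is what forces the first condition $|N(R)| \leq n-1$, and confirming that this aligns exactly with the $\min\{n, \cdot\}$ truncation on the expansion side is what makes the two inequalities match on the nose.
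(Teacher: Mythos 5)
The paper does not actually prove this proposition --- it is imported verbatim from Kim (1978) with no argument given --- so there is no in-paper proof to compare against. Your derivation is correct and self-contained from the paper's definition: the dictionary $|vB| = |N(\mathrm{supp}(v))|$ for Boolean row-vector products, the reduction of the quantifier over permutation pairs $(P,Q)$ and block sizes to a quantifier over subset pairs $(R,C)$ with $N(R) \cap C = \varnothing$, and the observation that taking the maximal $C = \{1,\dots,n\} \setminus N(R)$ loses no generality all check out, and the final bookkeeping showing that the pair of conditions $|N(R)| \leq n-1$ and $|N(R)| < |R| + r$ is exactly $|N(R)| < \min\{n, |R|+r\}$ is immediate. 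The one point you rightly flag as delicate is also the one place where the paper's definition is silent: it does not state explicitly that $s, t \geq 1$, and your remark that this convention is forced (otherwise a $0 \times n$ ``block'' would make every matrix decomposable for $r \geq 1$) and that the constraint $t \geq 1$ is precisely what generates the $n$ inside the minimum is the right way to pin it down.
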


Next we use the notion of $r$-indecomposabilty to bound the size of the transition monoid.

\begin{lemma}\label{lem:indecomposable}
    Consider an automaton $\automaton$ with the set of transition matrices $\sT$ as defined previously, such that all transition matrices in $\sT$ are $r$-indecomposable with $r > 0$. Then $\transmonoid{\automaton}$ will contain at most $\frac{\nsymbols^{\lceil (n - 1)/r \rceil}}{\nsymbols-1} + 1$ elements.
\end{lemma}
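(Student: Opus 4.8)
The plan is to read \cref{prop:indecomposable} (Kim's criterion) as a growth lemma on the \emph{weight} $|v|$ (number of nonzero coordinates) of Boolean row vectors under multiplication by the generators $\sT$. Since every generator $T^{(\syma)}$ is $r$-indecomposable, multiplying any nonzero row vector $v$ on the right by $T^{(\syma)}$ sends $|v|$ to at least $\min\{n, |v| + r\}$. I would write any non-identity element $M \in \transmonoid{\automaton}$ as a product $T^{(\syma_1)} \cdots T^{(\syma_k)}$ of generators and analyze it one row at a time: the $i$-th row of $M$ is $\basisVect_i^\top T^{(\syma_1)} \cdots T^{(\syma_k)}$, starting from the weight-one vector $\basisVect_i^\top$.

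First I would prove by induction on the number of factors that after $k$ multiplications every row has weight at least $\min\{n, 1 + kr\}$: the base case is $|\basisVect_i^\top| = 1$, and the inductive step is a direct application of \cref{prop:indecomposable} to the current row. I also need the saturation behavior, namely that once a row reaches weight $n$ it equals $\vone^\top$ and stays there. This follows because $r > 0$ forces $|\vone^\top T^{(\syma)}| \geq \min\{n, n + r\} = n$, so every generator has no zero column and hence $\vone^\top T^{(\syma)} = \vone^\top$. Setting $L \coloneqq \roundup{(n-1)/r}$, we have $1 + Lr \geq n$, so after $L$ factors every row equals $\vone^\top$ and therefore $M = \vone\vone^\top$, the all-ones matrix. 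The key consequence is that \emph{every} product of length at least $L$ collapses to this single fixed matrix.

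Consequently I would bound $\transmonoid{\automaton}$ by partitioning its elements according to the shortest generator word that realizes them: products of length $0 \le k \le L-1$ (the length-$0$ product being $\sI$), together with $\vone\vone^\top$, which absorbs every product whose shortest realization has length at least $L$. Since there are at most $\nsymbols^k$ distinct products of length $k$, summing the geometric series gives $|\transmonoid{\automaton}| \le \sum_{k=0}^{L-1}\nsymbols^k + 1 \le \frac{\nsymbols^{L}}{\nsymbols - 1} + 1$, which is exactly the claimed bound.

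The main obstacle is conceptual rather than computational: recognizing that the weight-growth estimate applies to each row independently, and that simultaneous saturation of all rows is precisely the statement ``$M$ equals the all-ones matrix.'' This collapse is what caps the effective generator-word length at $L$ and converts an a priori unbounded monoid into a finite geometric sum. The remaining care is minor and lies only in handling saturation at weight $n$ (so that weights never decrease, which relies on $r > 0$ ruling out zero columns) and in the off-by-one bookkeeping relating $\roundup{(n-1)/r}$ to the exponent in the geometric series.
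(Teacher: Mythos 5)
Your proposal is correct and follows essentially the same route as the paper: apply the Kim criterion (\cref{prop:indecomposable}) as a weight-growth estimate of $+r$ per generator, conclude that every product of $\lceil (n-1)/r\rceil$ generators saturates to the all-ones matrix, and bound the remaining elements by the geometric series $\sum_{k=0}^{\lceil(n-1)/r\rceil-1}\nsymbols^k$ plus one for the all-ones matrix. Your row-by-row induction with the explicit saturation step ($\vone^\top T=\vone^\top$ since $r>0$ rules out zero columns) is in fact a slightly more careful writeup of the paper's one-line claim that each row/column of the product reaches $n$ nonzero entries.
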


\begin{proof}
    Consider a product of the identity matrix $\sI$ with arbitrary $\left\lceil \frac{n-1}{r} \right\rceil$ matrices from $\sT$. Due to \cref{prop:indecomposable} each column of the resulting product will have at least $ 1 + \left\lceil \frac{n-1}{r} \right\rceil\cdot r = n$ entries, thus we will get a complete matrix \texttt{1}. Thus all non-\texttt{1} elements of $\transmonoid{\automaton}$ will be the products of at most $\left\lceil \frac{n - 1}{r} \right\rceil - 1$ transition matrices. If we sum over all possible lengths $i$ we get the desired bound:
    \begin{equation}
        |\transmonoid{\automaton}| - 1 \leq \!\!\! \!\!\!\sum_{i = 0}^{\lceil (n-1)/r \rceil - 1} \!\!\!\nsymbols^i \leq \frac{\nsymbols^{\lceil (n-1) / r \rceil}}{\nsymbols - 1}
    \end{equation}
    Counting the \texttt{1}-matrix concludes the proof.
\end{proof}

\begin{corollary}\label{corr:indecomposable}
    Following the definitions from \cref{lem:indecomposable} if all transition matrices are at least $\left\lceil \frac{n}{k\log{n}} \right\rceil $-indecomposable the state complexity of \cref{alg:onthefly} will be at most $2n^{k \log{\nsymbols}}$.
\end{corollary}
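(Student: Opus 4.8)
The plan is to substitute the indecomposability parameter $r = \left\lceil n/(k\log n)\right\rceil$ directly into the transition-monoid bound of \cref{lem:indecomposable} and then simplify. First I would check that the preconditions of \cref{lem:indecomposable} are met: with $r = \left\lceil n/(k\log n)\right\rceil > 0$ and every transition matrix in $\sT$ assumed to be at least $r$-indecomposable, the lemma applies verbatim and gives $|\transmonoid{\automaton}| \le \frac{\nsymbols^{\lceil (n-1)/r\rceil}}{\nsymbols-1} + 1$.

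Next I would control the exponent $\lceil (n-1)/r\rceil$. Since $r = \left\lceil n/(k\log n)\right\rceil \ge n/(k\log n)$, we have $\frac{n-1}{r} \le \frac{n}{r} \le k\log n$, so the exponent is (morally) at most $k\log n$; this is the one step that needs care, and I return to it below. The crucial algebraic simplification is the base-change identity $\nsymbols^{\log n} = n^{\log \nsymbols}$, which holds for any fixed logarithm base and yields $\nsymbols^{k\log n} = n^{k\log\nsymbols}$, exactly the shape appearing in the claimed bound. Combining these gives $\nsymbols^{\lceil (n-1)/r\rceil} \le n^{k\log\nsymbols}$.

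Finally I would absorb the remaining constants. For $\nsymbols \ge 2$ the factor $\frac{1}{\nsymbols-1} \le 1$, and since $n^{k\log\nsymbols}\ge 1$ the trailing $+1$ is dominated, so $\frac{\nsymbols^{\lceil (n-1)/r\rceil}}{\nsymbols-1}+1 \le n^{k\log\nsymbols}+1 \le 2\,n^{k\log\nsymbols}$. This bounds $|\transmonoid{\automaton}|$, and invoking \cref{thm:monoid} (which gives $|\detstates|\le|\transmonoid{\automaton}|$) transfers the bound to the state complexity of \cref{alg:onthefly}, finishing the argument.

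The main obstacle I anticipate is precisely the ceiling in the exponent: the inequality $\frac{n-1}{r}\le k\log n$ is genuine, but passing from it to $\lceil (n-1)/r\rceil \le k\log n$ is automatic only when $k\log n$ dominates the relevant integer, and for non-integer $k\log n$ with $(n-1)/r$ sitting just below an integer the ceiling could in principle nudge the exponent slightly past $k\log n$. I would resolve this by exploiting the slack that is already present: the double rounding (the $\lceil\cdot\rceil$ defining $r$, which makes $r$ strictly exceed $n/(k\log n)$ in the generic case, together with the numerator $n-1$ rather than $n$) supplies room, and in any event the generous factor of $2$ in the statement, combined with the looseness of \cref{lem:indecomposable}, comfortably absorbs the rounding. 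I would make this explicit rather than sweep it under the rug, since it is the only delicate point in what is otherwise a mechanical substitution.
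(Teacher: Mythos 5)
Your proof is correct and takes essentially the same route as the paper, which states this corollary without an explicit proof as a direct substitution of $r = \left\lceil n/(k\log n)\right\rceil$ into \cref{lem:indecomposable} followed by the base-change identity $\nsymbols^{k\log n} = n^{k\log\nsymbols}$. Your explicit attention to the ceiling in the exponent is in fact more careful than the paper, which silently relies on the factor of $2$ and the slack in \cref{lem:indecomposable} to absorb that rounding.
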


We notice that contrary to \cref{corr:commutative} the state complexity only has $\log{\nsymbols}$ in the power, which is a big improvement over $\nsymbols$. \citet{YouIndecomposable} lists some examples of $r$-connected graphs for large values of $r$.

\section{Automata with many transitions}\label{sec:automata-many-transitions}

Another interesting class of automata one would expect to have bounded state complexity are automata with many non-deterministic transitions. For example in the extreme case of an automaton with a complete transition relation already $2$ states are sufficient for its deterministic counterpart (the starting state and the complete power state). We will formulate the bounds for such automata by relating them to strongly connected automata from \cref{sec:strongly-connected}.
To achieve a polynomial state complexity for \cref{alg:onthefly} we want to apply \cref{corr:indecomposable}, however it does require a very strong indecomposability for all transition matrices. 

We demonstrate that almost all Boolean matrices with sufficiently many non-zero entries will be strongly indecomposable, thus almost all automata with sufficiently many transitions will have a polynomial state complexity.\looseness=-1

\begin{theorem}\label{thm:indecomposable}
    If we consider automaton $\automaton$ where we pick individual transition matrices in $\sT$ uniformly at random (however not necessarily independently from each other) from the set of $n \times n$ Boolean matrices with $n^2 / d$  non-zero entries for some constant $d$ with high probability it holds:
    \begin{equation}
        |\transmonoid{\automaton}| \leq 2\nsymbols^{d \log{n} + o(1)}
    \end{equation}
   Thus with high probability \cref{alg:onthefly} will have a polynomial state complexity of at most $\bigO{n^{d \log{\nsymbols}}}$.
\end{theorem}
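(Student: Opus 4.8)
The plan is to reduce everything to a single-matrix statement and then feed it into the machinery already built in \cref{sec:strongly-connected}. Concretely, I would show that, with high probability, \emph{every} transition matrix in $\sT$ is $r$-indecomposable for $r = \lceil n/(d\log n)\rceil$, and then invoke \cref{corr:indecomposable} with $k=d$ (together with \cref{lem:indecomposable}) to convert simultaneous $r$-indecomposability into the stated bounds. Since $\nsymbols^{d\log n}=n^{d\log\nsymbols}$, the monoid bound $|\transmonoid{\automaton}|\leq 2\nsymbols^{d\log n+o(1)}$ and the state-complexity bound $\bigO{n^{d\log\nsymbols}}$ are the same statement up to the rounding hidden in the $\lceil\cdot\rceil$, which I absorb into the lower-order term. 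The clause ``not necessarily independently'' is harmless here: a union bound is subadditive, so the reduction only needs the \emph{marginal} law of each transition matrix, and it suffices that $\nsymbols$ times the single-matrix failure probability tends to $0$.

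So the task is to prove that a uniformly random $n\times n$ Boolean matrix $B$ with exactly $m=n^2/d$ ones fails to be $r$-indecomposable with probability $o(1/\nsymbols)$, which I would do by a first-moment (union-bound) argument over all candidate zero blocks. By definition $B$ is not $r$-indecomposable precisely when, for some $P,Q\in\permgroup$, the matrix $PBQ$ has a top-right $\0$-block of size $s\times t$ with $s+t\geq n-r+1$; equivalently there are a set of $s$ rows and $t$ columns whose joint submatrix is all-zero. Any such block can be shrunk to one with $s+t=n-r+1$ and $s,t\geq 1$, so it is enough to bound the probability that some block of this exact shape is zero. For a fixed $s\times t$ block a short hypergeometric computation gives
\[
\Pr[\text{block zero}]=\binom{n^2-st}{m}\Big/\binom{n^2}{m}\leq\left(1-\tfrac{st}{n^2}\right)^{m}\leq e^{-st/d}.
\]
A union bound over the at most $\binom{n}{s}\binom{n}{t}$ placements then yields
\[
\Pr[B\text{ not }r\text{-indecomposable}]\leq\sum_{s+t=n-r+1}\binom{n}{s}\binom{n}{t}\,e^{-st/d}.
\]

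The real obstacle is showing this sum is $o(1/\nsymbols)$ for the chosen $r$, and the delicate point is the bookkeeping of the binomial coefficients across block shapes. For the ``long and thin'' blocks, where one of $s,t$ is small and the other is close to $n$, the naive estimate $\binom{n}{t}\leq(en/t)^t$ is hopelessly lossy; one must instead count via the complement, $\binom{n}{t}=\binom{n}{n-t}=\binom{n}{s+r-1}$, so that the number of wide blocks is governed by $\binom{n}{s+r-1}\leq(en/(s+r-1))^{s+r-1}$, whose logarithm is only $O\!\left(\tfrac{n\log\log n}{d\log n}\right)=o(n/d)$ when $s$ is small. Against this the exponent $st/d$ is at least $\approx n/(2d)$ on the symmetric half $s\leq t$ (there $t\geq n/2$), so it dominates every positive contribution. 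I would therefore use the symmetry $s\leftrightarrow t$ to restrict to $s\leq(n-r+1)/2$ and split into a ``small $s$'' range, where the complement bound makes all entropy terms $o(n/d)$, and a ``balanced'' range $s\asymp t\asymp n/2$, where $st/d\gtrsim n^2/d$ crushes the at most $4^n$ blocks. In each range the summand is $e^{-\Omega(n/d)}$, so the whole sum of at most $n$ terms is $n\,e^{-\Omega(n/d)}=o(1/\nsymbols)$ provided $\nsymbols=e^{o(n/d)}$, which holds for the constant- or polynomial-size alphabets of interest. Finally, plugging $r=\lceil n/(d\log n)\rceil$ into \cref{lem:indecomposable} gives $\lceil(n-1)/r\rceil\leq d\log n+o(\log n)$, hence $|\transmonoid{\automaton}|\leq 2\nsymbols^{d\log n+o(1)}$, and \cref{corr:indecomposable} with $k=d$ delivers the state complexity $\bigO{n^{d\log\nsymbols}}$.
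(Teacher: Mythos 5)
Your proposal is correct and shares the paper's overall skeleton: establish that, with high probability, every matrix in $\sT$ is $r$-indecomposable for $r \approx n/(d\log n)$, apply a union bound over the $\nsymbols$ (possibly dependent) matrices using only their marginal laws, and then feed the result into \cref{lem:indecomposable} and \cref{thm:monoid}. The genuine difference is in how the single-matrix statement is obtained: the paper simply cites \citet{kim1978two} for the fact that a uniformly random Boolean matrix with $(1+\epsilon+r)n\log n$ ones is $r$-indecomposable with probability $1-o(1)$, and then solves $(1+\epsilon+r)n\log n = n^2/d$ for $r$, whereas you prove this from scratch by a first-moment argument over zero blocks. Your computation checks out: the hypergeometric bound $\binom{n^2-st}{m}/\binom{n^2}{m} \leq e^{-st/d}$ is right, and the crucial observation --- that for thin blocks one must count the wide side via the complement, $\binom{n}{t}=\binom{n}{s+r-1}$, whose logarithm is $O\bigl(n\log\log n/(d\log n)\bigr) = o(n/d)$ and hence dominated by $st/d \geq sn/(3d)$ --- is exactly what makes the union bound close. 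This buys a self-contained proof with an explicit failure probability $e^{-\Omega(n/d)}$, which quantifies ``with high probability'' and makes the admissible growth of $\nsymbols$ explicit, at the cost of a page of case analysis the paper outsources to a citation. Two trivial quibbles: your $r=\lceil n/(d\log n)\rceil$ is slightly larger than the paper's $r = n/(d\log n)-(1+\epsilon)$, which is fine since your first-moment bound has ample slack; and the ceiling in $\lceil (n-1)/r\rceil$ costs you an additive $+1$ in the exponent rather than $+o(1)$, i.e., a constant factor of $\nsymbols$ --- harmless for the final $\bigO{n^{d\log\nsymbols}}$ claim and no worse than the paper's own rounding.
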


\begin{proof}
    \citet{kim1978two} showed that an $n \times n$ Boolean matrix picked uniformly at random from the set of matrices with exactly $(1 + \epsilon + r)n \log{n}$ non-zero entries will be $r$-indecomposable with probability $1 - o(1)$ for an arbitrary $\epsilon > 0$. Notice, if we sample from the matrices with at least as many non-zero entries the bound will also hold. Using the union bound the probability that all $\nsymbols$ transition matrices will be $r$-indecomposable is thus also $1 - o(1)$. By \cref{lem:indecomposable} we can substitute $r = \frac{n}{d \log{n}} - (1 + \epsilon)$ and bound:
    \begin{equation}
        |\transmonoid{\automaton}| \leq {2{\nsymbols}^{(n-1) /r}} \leq 2\nsymbols^{d \log{n} + o(1)}
    \end{equation}
    with high probability. The rest follows from \cref{thm:monoid}.
\end{proof}
Notice that the condition for $r$-indecomposability from \cref{thm:indecomposable} to require at least $(1 + \epsilon + r) n\log{n}$ non-zero entries is logarithmically tight due to every $r$-indecomposable matrix having at least $(1+r)n$ non-zero entries.\looseness=-1
We can also characterize the language properties of such automata, and they are relatively simple --- if you select a sufficiently long word $\str \in \kleene{\alphabet}$ it will almost always be accepted. 

It is important to remember that not all such automata will have low state complexity, for example one can easily add many ``useless'' states with a lot of redundant transitions to Moore's automaton such that the underlying language is not affected.

\begin{remark}
    If we allow transition matrices to be picked independently, already $\bigOmega{\sqrt{n^3 \log{n}}}$ transitions being sufficient for the constant state complexity, originally derived by \citet{kim1978two} (for binary semigroups). Looking only at uncorrelated transition matrices is a big (and impractical) simplification, thus the extended bounds serve mostly mathematical curiosity.
\end{remark}

\section{Alternative approaches}\label{sec:alternative-approaches}

In this section we consider some criteria which are not connected to transition monoids. The motivation behind is the simple fact that our most general bound as presented in \cref{thm:monoid} is not even asymptotically tight when applied to deterministic automata. Indeed, as mentioned earlier for deterministic automata with $\nstates = n$ states and binary alphabet the size of the transition monoid can be close to $n^n \left(1 - \frac{2}{\sqrt{n}} \right)$. However it trivially holds that the on-the-fly algorithm will require at most $n$ states on an arbitrary deterministic automaton.

The observed discrepancy comes from the absence of information about the starting states of $\automaton$ in its transition monoid $\transmonoid{\automaton}$. Thus, even if  $|\transmonoid{\automaton}|$ is exponentially large (see \cref{obs:on-the-fly}), the number of states $\detstates$  can be much smaller, most notably for deterministic automata it will be $\leq n$. Essentially $|\transmonoid{\automaton}|$ acts as a bound for the worst possible choice of the initial set $\initial$, i.e., resulting in the biggest blow-up.
To mitigate this issue we consider some alternative approaches that were proposed to measure non-determinism and analyse how they relate to on-the-fly algorithm.

\begin{definition}
Let $\automaton = \fsatuple$ be a FSA and $\str \in \kleene{\alphabet}$. The \defn{tree width} of $\automaton$ on $\str$, denoted as $\tau_{\automaton}(\str)$, is the number of different paths with yield $\str$ starting in some initial state $\stateq \in \initial$. The tree width of $\automaton$ is defined as:
\begin{equation}
    \tw{\automaton} = \max{\left\{ \tau_{\automaton}(\str) \ | \ \str \in \kleene{\alphabet} \right\}}
\end{equation}
We say that $\automaton$ has finite tree width if $\tw{\automaton}$ is finite.\looseness=-1
\end{definition}

It is easy to see that the tree width of an arbitrary deterministic automaton is exactly $1$ since we have a unique starting state and a unique transition for each symbol. Moreover, tree width gives a natural upper bound for the state complexity of \cref{alg:onthefly}.

\begin{lemma}\label{lem:bounded-width}
    Given a FSA $\automaton$ with $\tw{\automaton} = k$ and $k \leq n-1$ the on-the-fly algorithm will require at most $\frac{n^k}{(k-1)!} + 1$ states.
\end{lemma}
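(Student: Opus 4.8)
The plan is to combine one conceptual observation — that tree width caps the \emph{cardinality} of each reachable power state — with a routine count of how many small subsets of $\states$ exist. First I would recall from \cref{obs:on-the-fly} that every state of $\detautomaton$ is a power state of the form $\powerstate = \{\stateq \mid \text{some path from } \initial \text{ to } \stateq \text{ has yield } \str\}$ for some $\str \in \kleene{\alphabet}$. The central claim is that $|\powerstate| \leq k$ for every reachable $\powerstate$: each $\stateq \in \powerstate$ is the endpoint of at least one path with yield $\str$ starting in $\initial$, and choosing one such path per state gives an injection of $\powerstate$ into the set of paths with yield $\str$ (distinct endpoints force distinct paths). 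Hence $|\powerstate| \leq \tau_{\automaton}(\str) \leq \tw{\automaton} = k$. Applying this to $\str = \eps$ additionally shows $|\initial| \leq k$, so even the initial power state respects the bound, and (counting the possibly-reachable empty power state as size $0$) every state of $\detautomaton$ is a subset of $\states$ of cardinality at most $k$.

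Second, I would bound $|\detstates|$ by the number of such subsets:
\begin{equation}
    |\detstates| \leq \sum_{i=0}^{k} \binom{n}{i} = 1 + \sum_{i=1}^{k} \binom{n}{i}.
\end{equation}
The remaining work is the combinatorial estimate $\sum_{i=1}^{k}\binom{n}{i} \leq \frac{n^k}{(k-1)!}$. The cleanest route uses $\binom{n}{i} \leq \frac{n^i}{i!}$ together with the observation that, since $k \leq n-1 < n$, the ratio of consecutive terms $\tfrac{n^i/i!}{\,n^{i-1}/(i-1)!\,} = \tfrac{n}{i} \geq 1$ for all $i \leq k$, so $\frac{n^i}{i!}$ is non-decreasing on $1 \leq i \leq k$ and is maximized at $i = k$. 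Therefore $\sum_{i=1}^{k}\binom{n}{i} \leq k \cdot \frac{n^k}{k!} = \frac{n^k}{(k-1)!}$, and adding the single extra state gives the claimed bound $\frac{n^k}{(k-1)!} + 1$.

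I do not expect a genuine obstacle: the entire conceptual content sits in the size bound $|\powerstate| \leq k$, i.e.\ recognizing that tree width limits how many states any power state can contain, not how many power states arise. Once that bridge is established the statement reduces to counting subsets of size at most $k$, and the only mild subtlety is arguing that the binomial sum is dominated by its last term — which the monotonicity of $n^i/i!$ settles, and which is valid precisely because the hypothesis $k \leq n-1$ forces $k < n$.
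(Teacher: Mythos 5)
Your proof is correct and follows essentially the same route as the paper's: bound the cardinality of every reachable power state by $k$ using the tree width (each state in a power state is the endpoint of a distinct path with the given yield), then count subsets of $\states$ of size at most $k$ and dominate the binomial sum by its last term using $k \leq n-1$. Your handling of the final estimate --- separating the $i=0$ term as the standalone $+1$ and bounding $\sum_{i=1}^{k} n^i/i!$ by $k \cdot n^k/k! = n^k/(k-1)!$ --- is in fact slightly more careful than the paper's displayed chain, which lumps the $i=0$ term into the sum and writes an equality that only holds once that term is split off as you did.
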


\begin{proof}
    Since for an arbitrary word $\str$ the width of the computation tree $\tau_{\automaton}(\str)$ is bounded by $k$, at no point in the execution  will there be a power state $\powerstate \in \detstates$ with more than $k$ entries. Thus the total number of reachable power states can be bounded by summing up all possible combinations:
    \begin{equation}
        \sum_{i = 0}^{k} \binom{n}{i} \leq \sum_{i = 0}^{k} \frac{n^i}{i!} \leq \sum_{i = 0}^{k} \frac{n^k}{k!} = \frac{n^k}{(k-1)!} + 1
    \end{equation}
\end{proof}
\citet{PalioudakisFiniteTW} have shown the optimality of the state complexity derived in \cref{lem:bounded-width} by constructing a family of automata $\automaton_{n, k}$ with $n$ states and $\tw{\automaton_{n, k}} = k$ such that the minimal equivalent deterministic automaton has a state complexity of $1 + \sum_{i = 1}^k \binom{n-1}{i}$. Moreover, they give a complete characterization of all automata of bounded tree width.

\begin{lemma}[\citet{palioudakis2012state}]\label{lem:no-cycle}
    An FSA $\automaton = \fsatuple$ has finite tree width if and only if no directed cycle in $\trans$ contains a non-deterministic transition.
\end{lemma}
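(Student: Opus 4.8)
I would prove the two implications separately, the forward implication (the acyclicity condition suffices for finiteness) being the routine one and the converse being delicate. Call a transition $(\stateq, \syma, \stater) \in \trans$ \emph{branching} if its source $\stateq$ has at least two outgoing transitions labelled $\syma$; these are exactly the non-deterministic transitions, and they are the only places at which two paths sharing a common prefix can split apart. Write $N$ for the number of branching transitions and $\Delta$ for the maximum out-degree of a state; throughout I would pass to the trim sub-automaton, since inaccessible states never contribute paths from $\initial$.

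\textbf{No branching transition on a cycle $\Rightarrow$ finite tree width.} The heart of this direction is a reuse lemma: if one fixed transition $e = (\stateq, \syma, \stater)$ is used twice on a single path, then the walk between the two uses runs from $\stater$ back to $\stateq$; extracting a simple path from it and prepending $e$ yields a simple directed cycle still containing $e$. Hence, under the hypothesis, no branching transition is used more than once on any path. For a fixed yield $\str$ the paths from $\initial$ with that yield form a computation forest whose branch nodes are exactly the applications of branching transitions; by the reuse lemma every root-to-leaf path meets at most $N$ branch nodes, each of out-degree at most $\Delta$. Contracting the forced (deterministic) segments leaves a forest of height at most $N$ and arity at most $\Delta$, so $\tau_{\automaton}(\str) \le |\initial|\,\Delta^{N}$ for every $\str$, and therefore $\tw{\automaton} < \infty$.

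\textbf{The converse (contrapositive).} Suppose a directed cycle $C$ carries a branching transition $e = (\stateq, \syma, \stater)$, with a competing transition $(\stateq, \syma, \stater')$, $\stater' \neq \stater$, and let $v \in \kleene{\alphabet}$ be the label read around $C$ from $\stateq$ back to $\stateq$. Fix an access word $u$ with $\stateq_0 \xrightarrow{u} \stateq$ for some $\stateq_0 \in \initial$. The plan is to study the yields $u\,v^{m}$: one path traverses $C$ a full $m$ times, and forking onto $(\stateq, \syma, \stater')$ at one or more of the $m$ returns to $\stateq$ produces syntactically distinct paths of the same yield, so I would aim to show that the number of such paths grows without bound in $m$.

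\textbf{Main obstacle.} The crux is the \emph{liveness} of the forked branches: departing the cycle at $\stater'$ only helps if the resulting partial path can still read the remaining suffix of $u\,v^{m}$, and a lone non-deterministic transition on a cycle does not by itself guarantee this (a branch may escape the cyclic structure and die, leaving the tree width finite). I would resolve it by pushing the argument down to a recurrent configuration: using finiteness of $\states$ and co-accessibility, locate states $\statep \neq \stateq'$ and a word $w$ that loops at both, $\statep \xrightarrow{w} \statep$ and $\stateq' \xrightarrow{w} \stateq'$, with the fork realizing $\statep \xrightarrow{w} \stateq'$, so that pumping $w$ spawns one new surviving path per repetition and forces $\tau_{\automaton}$ to grow. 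Establishing that such a simultaneously-looping pair must arise whenever a branching transition sits on a cycle in the trimmed automaton — rather than merely a fork that leaves the cyclic part — is the technical heart of the argument and the step I expect to require the most care, and it is precisely where the intended reading of ``non-deterministic transition on a cycle'' must be pinned down.
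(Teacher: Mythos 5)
The paper offers no proof of this lemma---it is imported verbatim from \citet{palioudakis2012state}---so the only question is whether your argument stands on its own. Your forward direction does: the reuse lemma (a transition used twice on one path must lie on a directed cycle) is correct, it implies that under the hypothesis every path applies each of the $N$ non-deterministic transitions at most once, and the bound $\tau_{\automaton}(\str) \le |\initial|\,\Delta^{N}$ follows. The converse, however, is not a proof: you stop exactly at the step you yourself call the technical heart, and that step cannot be carried out for the statement as this paper formulates it. With $\tau_{\automaton}(\str)$ defined as the number of paths whose yield is \emph{exactly} $\str$, the ``only if'' direction is false. Take $\states = \{\stateq, \stater'\}$, $\initial = \{\stateq\}$, and transitions $(\stateq,\syma,\stateq)$ and $(\stateq,\syma,\stater')$: the self-loop is a directed cycle carrying a non-deterministic transition, yet for every $m$ the only paths with yield $\syma^m$ are the $m$-fold loop and the path that defects to $\stater'$ on the \emph{last} step, so $\tw{\automaton}=2$. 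Trimming does not rescue this (add $\stater' \xrightarrow{b} \stateq_f$ with $\stateq_f$ final; the automaton is trim and the tree width is still $2$), so the simultaneously-looping pair $\statep \xrightarrow{w} \statep$, $\stateq' \xrightarrow{w} \stateq'$ that your plan requires simply need not exist.

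The lemma is true under the definition actually used in the cited source (following Hromkovi\v{c} et al.), where the tree width of $\automaton$ on $w$ is the number of \emph{leaves of the computation tree}, so a branch that dies before consuming all of $w$ still contributes a leaf. Under that reading the liveness obstacle you identify evaporates: on input $u\,v^{m}$ each of the $m$ returns to $\stateq$ spawns a distinct leaf via the competing transition, dead or alive, so the leaf count is at least $m$ and the converse is immediate---no recurrent configuration is needed. In other words, the gap in your proposal is not a missing combinatorial trick but a mismatch between the paper's restated definition of $\tau_{\automaton}$ and the theorem being cited; to close it you must either adopt the leaf-counting definition or add a hypothesis (e.g., a complete transition relation) under which every forked branch can consume the remaining suffix.
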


From \cref{lem:no-cycle} we can directly infer that automata with irreducible (and indecomposable) transition matrices we considered in \cref{sec:one-letter}, \cref{sec:commuting_automata} and \cref{sec:strongly-connected} will in general not have a bounded tree width. This indirectly confirms our previous observation that there is no direct connection between the tree width $\tw{\automaton}$ and size of transition monoid $\transmonoid{\automaton}$. 

\begin{theorem}[\citet{HROMKOVICthm}]\label{thm:Hromokovic}
    For an arbitrary FSA $\automaton$ with $n$ states the tree width $\tw{\automaton}$ is either bounded by a constant or between linear and superlinear, or otherwise $2^{\Theta(n)}$.
\end{theorem}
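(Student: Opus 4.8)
The plan is to read the theorem as a statement about the growth of tree width as a function of the input length rather than as a single number. Fixing $\automaton$, I would set $W(m) \defeq \max\{\tau_{\automaton}(\str) \mid \str \in \alphabet^m\}$, so that $\tw{\automaton} = \sup_m W(m)$, and then prove a trichotomy for the asymptotics of $W(m)$: either $W(m) = \Theta(1)$, or $W(m)$ is a polynomial $\Theta(m^k)$ with $k \ge 1$ (which is what the statement calls ``between linear and superlinear'', linear at $k=1$ and superlinear for $k \ge 2$), or $W(m) = 2^{\Theta(m)}$. The substance of the theorem is really the two \emph{gaps}: nothing occurs strictly between constant and linear, and nothing occurs strictly between polynomial and exponential.

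The bounded case is immediate from \cref{lem:no-cycle}: $W$ is bounded exactly when no directed cycle of $\trans$ carries a non-deterministic transition, which is the $\Theta(1)$ regime. To close the lower gap I would assume $W$ unbounded, so some cycle contains a non-deterministic transition. Writing the cycle as $\stateq \to \stateq$ with yield $\strz$ and a branch point on it reachable from $\initial$ via a word $\strx$, I would pump $\strz$: each additional traversal of the cycle passes the branch point once more and contributes at least one further walk with the same yield, giving $W(m) = \Omega(m)$. This rules out sublinear unbounded growth.

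The heart of the argument is separating polynomial from exponential growth by a doubling criterion on walks. Call $\automaton$ \emph{exponentially branching} if there exist a state $\stateq$ and a word $\strz \in \kleeneplus{\alphabet}$ admitting two distinct paths $\stateq \to \stateq$ both with yield $\strz$. If this holds, iterating $\strz$ multiplies the number of $\stateq \to \stateq$ walks by at least two each time, so $W(m) \ge 2^{\Omega(m)}$; since the number of $a$-successors of any state is bounded by a constant $b$, we also have $W(m) \le |\initial|\cdot b^m = 2^{O(m)}$, hence $W(m) = 2^{\Theta(m)}$. If the criterion fails, I would show $W$ is polynomially bounded with an integer degree $d \le \nstates$ equal to the maximal number of distinct \emph{branching cycles} that a single word can traverse in sequence: decomposing $\automaton$ into strongly connected components and pushing the counting onto the resulting DAG, the absence of a doubling cycle forces each component to contribute only a linear factor in its number of re-entries, and chaining at most $d$ such factors along a single run yields $W(m) = O(m^d)$. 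A matching $\Omega(m^d)$ lower bound follows by pumping all $d$ cycles in such a chain the same number of times.

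The main obstacle is precisely this polynomial regime: proving that failure of the doubling criterion forces a clean integer degree and excludes any intermediate rate such as $2^{\Theta(\sqrt{m})}$ or $m^{\Theta(\log m)}$. This rests on the SCC-decomposition bookkeeping above --- showing that without a doubling cycle the walk count inside a component grows only linearly in its number of re-entries, that the only way to combine components is to chain them along a single accepting run, and that the exponent therefore equals the length of the longest chain of branching components. Establishing the matching upper and lower bounds via simultaneous pumping of all cycles in such a chain is the technical crux, and is exactly where \citet{HROMKOVICthm} deploy their pumping and communication-complexity machinery.
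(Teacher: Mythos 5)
The paper offers no proof of this statement --- it is quoted as a result of \citet{HROMKOVICthm} --- so there is no internal argument to measure yours against. Judged on its own, your reconstruction has the right architecture: it is essentially the leaf-size/ambiguity trichotomy of Hromkovi\v{c} et al.\ and Weber--Seidl (reinterpret $\tw{\automaton}$ as a growth function $W(m)$ of the input length, separate bounded from unbounded via \cref{lem:no-cycle}, certify the exponential regime by a state $\stateq$ admitting two distinct $\stateq\to\stateq$ paths with the same yield, and handle the rest by an SCC analysis). But as written it is a plan, not a proof, and the part you leave open is the entire content of the theorem. The upper gap --- that nothing occurs strictly between polynomial and exponential --- is precisely the claim that failure of your doubling criterion forces $W(m)=m^{O(1)}$; you assert that the SCC decomposition then ``forces each component to contribute only a linear factor'' but give no argument for why the absence of two label-equivalent $\stateq\to\stateq$ paths linearizes the number of walks through a component, and you explicitly defer this step to the very reference whose theorem you are trying to prove. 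That is the crux, and it is missing.

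There is also a concrete hole in the lower gap. Under the paper's definition a path with yield $\str$ must consume all of $\str$, so when you pump the cycle and branch off at the non-deterministic transition, the deviating walk contributes to $\tau_{\automaton}(\strx\strz^k)$ only if it can be extended over the remaining suffix of $\strz^k$; if the off-cycle successor is a dead end you obtain no new complete paths even though \cref{lem:no-cycle} declares the tree width infinite. (Under the leaf-size convention of the original reference, where aborted computations count as leaves, your pumping goes through verbatim; under the path-counting convention here you must additionally argue that unboundedness supplies a surviving branch, e.g.\ by extracting the witness cycle from a computation that already realizes a large tree width.) Finally, the middle regime you claim, $W(m)=\Theta(m^k)$ for an integer $k$, is stronger than what the theorem asserts and than what is needed; the exact-degree statement is known for ambiguity, but for tree width you would owe a separate argument for the matching $\Omega(m^d)$ bound.
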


\cref{thm:Hromokovic} demonstrates that no quasipolynomial state complexity bounds can be established using \cref{lem:bounded-width}. Indeed, by applying \cref{lem:bounded-width} we get a state complexity bound of the form $\bigO{n^{\tw{\automaton}}}$, thus for a constant tree width we achieve polynomial complexity, and for anything linear or above an exponential state complexity (effectively producing a ``blow-up'' comparable to the naive power set construction).

\section{Weighted finite-state automata}\label{sec:appendix-weighted-automata}

Another useful property of the algebraic approach is the possibility of extending the state complexity bounds to general weighted finite-state automata. Despite the fact that contrary to classical automata they can't always be determinized on-the-fly, as shown by \citet{Mohri-on-the-fly}, a slight modification of our core result from \cref{thm:monoid} still holds.

\begin{definition}\label{def:semifield}
    An algebra $\langle \K, \oplus, \otimes, \0, \1 \rangle $ is called a \defn{(commutative) semifield} if and only if it satisfies the following properties:
    \begin{enumerate}
        \item $\langle \K, \otimes, \1 \rangle$ is a commutative group
        \item $\langle \K, \oplus, \0 \rangle$ is a commutative monoid
        \item $\otimes$ with $\0$ annihilates $\K$
        \item $\otimes$ distributes over $\oplus$
    \end{enumerate}
    Moreover we call the semifield \defn{zero-sum-free} if and only if for all $x, y \in \K$ it holds:\looseness=-1
    \begin{equation}
        x \oplus y = \0 \Rightarrow x = y = \0
    \end{equation}
\end{definition}

A couple prominent examples of zero-sum-free semifields are the Boolean semifield
\begin{equation}
    \Boolean \coloneqq \langle \{ 0, 1 \}, \vee, \wedge, 0, 1 \rangle
\end{equation}
which contains only two elements and the tropical semifield
\begin{equation}
    \Tropical \coloneqq \langle \R \cup \{ \infty \}, \min, +, \infty, 0 \rangle
\end{equation} 
which contains infinitely many elements. With the algebra out of the way we can present automata over general semifields. 

\begin{definition}
    A \defn{weighted finite-state automaton (WFSA)} $\wfsa$ over a (commutative) semifield $\langle \K, \oplus, \otimes, \0, \1 \rangle $ is a quintuple $\wfsatuple$ where each transition in $\trans \subseteq \states \times \alphabet \times \K \times \states$ has a weight over $\K$ and $\initf, \finalf$ are initial and final weight functions of the form $\initial \rightarrow \K$ and $\final \rightarrow \K$ respectively. A transition with weight $\0$ is an absent transition, and we enforce that between every two states there is at most one transition per label.
\end{definition}

The definition of deterministic weighted finite-state automaton is fully equivalent to \cref{def:deterministic}, except that now the transitions have weight over  $\K$ instead of over Boolean semifield $\Boolean$. Similarly we can extend the notion of the transition monoid to a \defn{weighted transition monoid} $\wtransmonoid{\wfsa}$, which is generated by weighted transition matrices of $\wfsa$ from $\K^{\states \times \states}$.

\citet{Mohri-on-the-fly} presents a weighted on-the-fly algorithm (see \cref{app:Mohri}) for zero-sum-free semifields which has a very similar paradigm to \cref{alg:onthefly} with the exception that every power state $\powerstate$ is a vector in $\K^\states$. The weights associated with individual states are referred to as \defn{residual weights}. We notice, in the weighted case termination is not always guaranteed.

\begin{lemma}\label{lem:weighted}
    The set of states $\detstates$ of automaton $\detautomaton$ computed by weighted on-the-fly algorithm is a normalized product of $\initf(\initial)$ under the transition monoid $\wtransmonoid{\automaton}$ i. e. $\detstates = $ 
    \begin{equation}
         \left\{ \powerstate \in \K^\states \ \left| \ \powerstate = \frac{ \left({\initf(\initial)}^\top M \right)}{\bigoplus_{i = 1}^{\nstates} {\left({\initf(\initial)}^\top M \right)}_i} \right. \right\} \cup \left\{ \powerstate_\initial \right\} 
    \end{equation}

    \noindent where $M \in \wtransmonoid{\wfsa} \setminus \{ \sI \}$ and $\powerstate_\initial = \initf(\initial)$. 
\end{lemma}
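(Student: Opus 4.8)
The plan is to prove the set equality by establishing the weighted analogue of \cref{obs:on-the-fly} and then tracking the effect of normalization. Writing $T^{(\syma)} \in \K^{\states \times \states}$ for the weighted transition matrix of each $\syma \in \alphabet$, I would first observe that, exactly as in the unweighted setting of \cref{obs:on-the-fly}, the power states explored by the weighted on-the-fly algorithm are exactly the residual-weight vectors reachable from the start power state $\powerstate_\initial = \initf(\initial)$ by reading some word $\str \in \kleene{\alphabet}$: the stack-based exploration pushes every newly discovered residual vector exactly once and, for each symbol, computes its unique normalized successor. Since termination is not guaranteed here, I treat $\detstates$ as the possibly infinite set of reachable power states rather than appealing to a halting argument.

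The core step is a telescoping lemma: for every nonempty word $\str = \syma_1 \cdots \syma_m$, the power state reached by reading $\str$ equals the normalization of $\initf(\initial)^\top \morphism{\str}$, where $\morphism{\str} = \prod_{i=1}^m T^{(\syma_i)} \in \wtransmonoid{\wfsa}$. I would prove this by induction on $m$. For the inductive step, suppose the state after reading $\syma_1 \cdots \syma_{m-1}$ is $\powerstate = c^{-1} \otimes \left( \initf(\initial)^\top \morphism{\syma_1 \cdots \syma_{m-1}} \right)$ with $c = \bigoplus_i \left( \initf(\initial)^\top \morphism{\syma_1 \cdots \syma_{m-1}} \right)_i$. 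Reading $\syma_m$ produces the unnormalized vector $\powerstate^\top T^{(\syma_m)}$; because $c^{-1}$ is a scalar, distributivity of $\otimes$ over $\oplus$ lets it factor out of the matrix--vector product, so $\powerstate^\top T^{(\syma_m)} = c^{-1} \otimes \left( \initf(\initial)^\top \morphism{\str} \right)$. The normalization constant of this vector is then $c^{-1} \otimes \bigoplus_i \left( \initf(\initial)^\top \morphism{\str} \right)_i$, and the two factors of $c^{-1}$ cancel upon renormalizing, leaving exactly the normalization of $\initf(\initial)^\top \morphism{\str}$. The multiplicative group structure on the nonzero elements of $\K$ is what permits these inverses, and zero-sum-freeness guarantees that each component-sum $\bigoplus_i(\cdot)_i$ of a nonzero weight vector is itself nonzero, so every normalization is well-defined.

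With the telescoping lemma in hand, both inclusions follow. For the inclusion $\subseteq$, any state of $\detstates$ is either $\powerstate_\initial$ (the start state, listed separately on the right) or is reached by a nonempty word and hence, by the lemma, equals the normalization of $\initf(\initial)^\top M$ for $M = \morphism{\str} \in \wtransmonoid{\wfsa}$. For the inclusion $\supseteq$, every $M \in \wtransmonoid{\wfsa} \setminus \{\sI\}$ is by definition a nonempty product of generators $T^{(\syma)}$, i.e.\ $M = \morphism{\str}$ for some nonempty $\str$, and the exhaustive exploration reaches the corresponding normalized vector.

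The main obstacle I anticipate is the bookkeeping around the identity element. Two issues must be reconciled: first, that the scalar normalization constants telescope cleanly, which is the content of the induction and hinges on distributivity together with the multiplicative-group and zero-sum-free axioms of \cref{def:semifield}; and second, the edge case in which a \emph{nonempty} word happens to map to $M = \sI$. In that case the reachable state is the normalization of $\initf(\initial)$, which is indexed by $\sI$ and therefore excluded from the main set on the right, yet it is genuinely reachable. I would address this either by noting that it coincides with the normalized vector induced by some other non-identity $M$, or by observing that it only affects whether the \emph{normalized} initial vector is counted, which does not alter the essential characterization, since the start state $\powerstate_\initial = \initf(\initial)$ itself is always retained separately. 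Handling this corner precisely is the one place where the statement requires care beyond a direct translation of \cref{obs:on-the-fly}.
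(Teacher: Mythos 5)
Your proposal is correct and follows essentially the same route as the paper: the paper's proof is a one-line sketch that inducts over words $\str$ and verifies the normalized-product property for the power state reached by the unique path with yield $\str$, which is exactly your telescoping induction. Your additional care about the $M=\sI$ corner case and the role of zero-sum-freeness and commutativity in cancelling the normalization constants is a genuine refinement of detail the paper leaves implicit, but not a different approach.
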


\begin{proof}
    We consider an arbitrary word $\str \in \kleene{\alphabet}$ and show inductively that Mohri's algorithm (see \cref{app:Mohri}) satisfies the property for the power state $\powerstate$ reached by the (unique) path with yield $\str$. Notice, the residual weight of initial state $\powerstate_\initial$ is unchanged and set to $\initf(\initial) \in \K^\states$.
\end{proof}

From \cref{lem:weighted} we can directly derive the extension of \cref{thm:monoid} for weighted automata.

\begin{theorem}\label{thm:weighted-monoid}
    Given a WFSA $\wfsa$ the equivalent deterministic automaton $\detautomaton$ computed by means of weighted on-the-fly will satisfy
    \begin{equation}
        |\detstates| \leq |\wtransmonoid{\wfsa}| + 1
    \end{equation}
    and its construction will require at most $\bigO{\nsymbols \nstates |\wtransmonoid{\wfsa}|}$ steps.
\end{theorem}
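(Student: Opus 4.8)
The plan is to mirror the proof of \cref{thm:monoid}, substituting its weighted counterpart \cref{lem:weighted} for \cref{obs:on-the-fly}. First I would invoke \cref{lem:weighted} to express $\detstates$ as a union of two pieces: the single initial power state $\powerstate_\initial = \initf(\initial)$, together with the collection of \emph{normalized} row vectors obtained from ${\initf(\initial)}^\top M$ as $M$ ranges over $\wtransmonoid{\wfsa} \setminus \{ \sI \}$. The crucial observation is that the assignment sending a monoid element $M$ to its corresponding normalized vector $\powerstate$ is a well-defined \emph{function} on $\wtransmonoid{\wfsa} \setminus \{ \sI \}$, so its image contains at most $|\wtransmonoid{\wfsa}| - 1$ distinct elements. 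That the map is many-to-one (several monoid elements may normalize to the same power state) only helps, since it turns the count into an inequality.

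Counting both pieces then yields $|\detstates| \leq (|\wtransmonoid{\wfsa}| - 1) + 1 = |\wtransmonoid{\wfsa}| \leq |\wtransmonoid{\wfsa}| + 1$, which establishes the state bound. I would point out that the extra $+1$ in the statement leaves harmless slack: it comfortably absorbs the degenerate possibility that the unnormalized $\powerstate_\initial$ coincides with, or is disjoint from, every normalized image, so no case analysis on whether $\initf(\initial)$ happens to be normalized is needed.

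For the running-time claim I would argue exactly as in \cref{thm:monoid}. Each element popped from the stack in the weighted on-the-fly algorithm (\cref{app:Mohri}) corresponds to a distinct member of $\detstates$, so the outer loop executes at most $|\detstates| \leq |\wtransmonoid{\wfsa}|$ times; each iteration inspects at most $\nsymbols \nstates$ transitions while forming the successor power states together with their residual weights. Multiplying the two factors gives the $\bigO{\nsymbols \nstates |\wtransmonoid{\wfsa}|}$ bound.

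The main obstacle is not the counting but the fact that, unlike the unweighted case, the weighted on-the-fly algorithm need not terminate: when $\wtransmonoid{\wfsa}$ is infinite both bounds are vacuous. The real content of the theorem is thus the conditional statement that finiteness of the weighted transition monoid simultaneously certifies termination and the stated complexity, and I would make explicit that the result is meaningful precisely for those zero-sum-free semifields on which $\wtransmonoid{\wfsa}$ is finite. I would also stress that \cref{lem:weighted} is doing the heavy lifting here, as it supplies the identification of reachable power states with normalized monoid images on which the entire argument rests.
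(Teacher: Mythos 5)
Your proposal is correct and follows essentially the same route as the paper: both rest entirely on \cref{lem:weighted} to identify each reachable power state with at least one element of $\wtransmonoid{\wfsa}$ (plus the unnormalized initial power state), and both obtain the runtime bound by noting that each power state is popped from the stack exactly once and generates at most $\nsymbols\nstates$ transition work. Your version is merely more explicit about the counting (even yielding the marginally tighter $|\detstates|\leq|\wtransmonoid{\wfsa}|$) and about the termination caveat, which the paper states separately right after the theorem.
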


\begin{proof}
    From \cref{lem:weighted} we can conclude that every power state $\powerstate \in \detstates$ corresponds to at least one matrix from $\wtransmonoid{\wfsa}$. The bound on the execution time follows from the observation that each power state is processed exactly once.
\end{proof}

As a consequence, if $\wtransmonoid{\wfsa}$ is finite then weighted on-the-fly is guaranteed to terminate. Moreover, \cref{thm:weighted-monoid} allows us to formulate state complexity bounds for weighted automata by only considering monoids of matrices, which forms an interesting open problem.

\section{Conclusion}\label{sec:conclusion}

In this work we have presented a new abstraction for analyzing the performance of the on-the-fly determinization algorithm by establishing a connection with algebraic automata theory. This allowed us to derive state complexity bounds and estimate the asymptotic runtime for various classes of finite-state automata. The summary of our results can be found in \cref{table:summary}.
%% algortihm from next section
\begin{algorithm*}[t]
\caption{\textsc{Weighted on-the-fly}}\label{alg:weighted-onthefly}
\begin{algorithmic}[t]
\Ensure $\automaton$ a WFSA with no $\eps$-transitions
\State $\detautomaton \gets (\alphabet, \detstates, \dettrans, \detinitf, \detfinalf)$
\State $\mathsf{stack} \gets \powerstate_\initial$
\Comment{$\powerstate_{\initial}$ is a power set of prior initial states and their weights}
\State $\detstates \gets \{ \powerstate_\initial \}$
\While{$|\mathsf{stack}| > 0$}
    \State \textbf{pop} $\powerstate$ from the $\mathsf{stack}$
    \ForAll{$\syma \in \alphabet $}
        \Comment{Each label corresponds to a new transition}
        \State $\sW_a(\stateq') \gets \bigoplus_{q \in \states} w \otimes r_{\powerstate}(q)  \text{ for all } (\stateq, a, w, \stateq') \in \trans$
        \State $w_a \gets \bigoplus_{q' \in Q} \sW_a(\stateq')$
        \State $\powerstate' \gets \varnothing$
        \ForAll{$\stateq'$ with $(\stateq, a, \bullet, \stateq')$}
            \Comment{Compute residual weights for all power state entries}
            \State $w_{\stateq'} \gets \sW_a(\stateq) \otimes w_a^{-1}$
            \State $\powerstate' \gets \powerstate' \cup \{ (\stateq', w_{\stateq'}) \}$
        \EndFor
        \State $\dettrans \gets \dettrans \cup \{\edge{\powerstate}{\syma}{w_a}{\powerstate'} \}$
        \Comment{Create a transition in a new power state}
        \If{$\powerstate' \notin \detstates$}
            \State $\detstates \gets \detstates \cup \{ \powerstate' \}$
            \State \textbf{push} $\powerstate'$ to the $\mathsf{stack}$
        \EndIf
    \EndFor
\EndWhile
\State $\detfinalf(\powerstate) \gets \bigoplus_{\stateq \in \powerstate} r_{\powerstate}(\stateq) \otimes \finalf(\stateq) \quad \forall \powerstate \in \detstates$
\Comment{Calculate final weights from residual weights}
\State \textbf{return} $\gets \detautomaton$
\end{algorithmic}
\end{algorithm*}

\begin{table}[htb]
    \centering
    \begin{tabular}{lll}
    \toprule
    Constraints & $\boldsymbol{\#}$ states & Runtime  \\
    \midrule
    one-letter & $n^2 - n + 2$ & $\bigO{n^3}$ \\ 
    commutative & $n^{2 \nsymbols}$ & $\bigO{n^{2 \nsymbols + 1}}$ \\ 
    $r$-indecomp. & $2n^{k \log{\nsymbols}}$ & $\bigO{n^{k \log{\nsymbols} + 1}}$ \\
    $\frac{n^2}{k}$-dense$^\dagger$ & $\bigO{n^{k \log{\nsymbols}}}$ & $\bigO{n^{k \log{\nsymbols} + 1}}$ \\ 
    $k$-tree width & $\frac{n^k}{(k-1)!} + 1$ & $\bigO{n^{k+1}}$ \\
    \bottomrule
    \end{tabular}
    \caption{Summarizing state complexity and runtime bounds derived for \cref{alg:onthefly} with $\nstates = n$ over a fixed alphabet $\alphabet$. We assume irreducibility for one-letter and commutative automata and $r$-indecomposability with $r \geq \lceil n / (k \log{n})\rceil$. $\dagger$ indicates that the bound holds with high probability.}
    \label{table:summary}
\end{table}

The bounds demonstrate that there are many different properties of automata which can limit the underlying non-determinism and that these properties are intrinsically non-uniform, and seem to have little in common. This leads us to the idea that algebraic properties of automata might be the method to quantify and measure non-determinism and unify many results present in the field. Although we didn't discuss it explicitly in the paper, some of the criteria --- such as the strongly connected property --- can hold just partially and still translate into limited state complexity (for example if an automaton is a concatenation of two strongly connected automata a similar argument can be made). Thus a straightforward extension of our approach might lead to more complexity bounds.

On the other hand, the algebraic method has some limitations that yet have to be resolved, most importantly the missing connection to the starting and final states. Closing (or at least measuring) this ``complexity gap'' remains an important open question. 

Our methodology can be extended and applied to weighted automata over general semirings. Analyzing matrix monoids with different semirings might lead to new insights and lay the foundation for the complexity theory of weighted finite-state automata, something that to our knowledge has not been formalized until this day.

\appendix

\section{Mohri's algorithm for weighted determinization} \label{app:Mohri}

Here we present a version of the weighted on-the-fly algorithm by \citet{Mohri-on-the-fly} over general semifields $\langle \K, \oplus, \otimes, \0, \1 \rangle $. In contrast to the original version we additionally assume commutativity under $\otimes$ for \cref{lem:weighted} to be applicable. The pseudocode for the full algorithm can be found in \cref{alg:weighted-onthefly}, and was adapted slightly to resemble the non-weighted determinization, as presented in \cref{alg:onthefly}. However there are some important differences that we will clarify.

We consider the output automaton $\detautomaton$ from the \cref{alg:weighted-onthefly}. Its state space $\detstates$ is a subset of $\K^\states$, meaning that every state $\powerstate \in \detstates$ is power state and additionally each state withing $\powerstate$ has a weight from $\K$ associated with it. However this weight is in no way connected to the initial weight function $\detinitf$ or final weight function $\detfinalf$. To avoid ambiguity we will refer to it as \defn{residual weight} and use $r_\powerstate$ function to refer to the residual weights for some power state $\powerstate$.

Further we observe that the initial weight function $\detinitf$ is essentially Boolean, since it only has weight $\1$ for initial power state $\powerstate_\initial \coloneqq \{ (\stateq, \initf(\stateq)) \ | \  \stateq \in \states, \ \initf(\stateq) \neq \0 \}$ and $\0$ everywhere else. The final weight function $\detfinalf(\powerstate)$ for some power state $\powerstate \in \detstates$ on the other hand is simply the dot product of all residual weights of $\powerstate$ with their non-deterministic final weights $\finalf$. Thus the main component carrying all information are the residual weights $r_{\powerstate}$.

The execution paradigm is very similar to \cref{alg:onthefly} where we start in the power state $\powerstate$ (initially $\powerstate_\initial$) and then ``explore'' all possible states and push them on the stack. Moreover, now we add the weights to individual transitions by summing up the weights from all transitions leaving the power state (see the calculation of $w_a$). The newly created power state will have the residual weights calculated very similarly, but the summation goes over individual weights (see the calculation of $\sW_a(\stateq'))$. Most importantly before assigning the residual weights to the new power state $\powerstate'$ we normalize them by the transition weight $w_a$ to keep the weight of the path unchanged.

\bibliography{tacl2021}
\bibliographystyle{acl_natbib}

\end{document}